\newcommand{\work}{W}
\newcommand{\spn}{S}
\newcommand{\worktrans}{H}
\newcommand{\probsize}{x}
\newcommand{\probsizeB}{y}
\newcommand{\prob}{z}
\newcommand{\probB}{y}
\newcommand{\probDom}{I}
\newcommand{\tollwork}{a}
\newcommand{\superadd}{g_1}
\newcommand{\tolldiv}{g_2}
\newcommand{\tollspan}{b}
\newcommand{\size}{size}
\newcommand{\sub}{h}
\newcommand{\expected}[1]{E\left[#1\right]}
\newcommand{\meanbound}{m}
\newcommand{\detrec}{u}
\newcommand{\detrecinv}{u'}
\newcommand{\probgt}[2]{Prob\left[#1 > #2\right]}
\newcommand{\probeq}[2]{Prob\left[#1 = #2\right]}
\newcommand{\probge}[2]{Prob\left[#1 \geq #2\right]}
\newcommand{\boundfun}{D}
\newcommand{\recterm}{d}
\newcommand*{\Sref}[1]{\hyperref[#1]{\S\ref*{#1}}}
\newcommand*{\eqnref}[1]{Equation \hyperref[#1]{\ref*{#1}}}
\newcommand*{\thmref}[1]{Theorem \hyperref[#1]{\ref*{#1}}}
\newcommand*{\lemref}[1]{Lemma \hyperref[#1]{\ref*{#1}}}
\newcommand*{\Krec}[1]{K^{#1}}
\newtheorem{thm}{Theorem}
\newtheorem{lemma}{Lemma}
\definecolor{StringRed}{rgb}{.637,0.082,0.082}
\definecolor{CommentGreen}{rgb}{0.0,0.55,0.3}
\definecolor{KeywordBlue}{rgb}{0.0,0.3,0.55}
\definecolor{LinkColor}{rgb}{0.55,0.0,0.3}
\definecolor{CiteColor}{rgb}{0.55,0.0,0.3}
\definecolor{HighlightColor}{rgb}{0.0,0.0,0.0}
\definecolor{grey}{rgb}{0.5,0.5,0.5}
\definecolor{red}{rgb}{1,0,0}
\title{Probabilistic Recurrence Relations for\\ Work and Span of Parallel Algorithms}
\author{Joseph Tassarotti}
\affil{Computer Science Department, Carnegie Mellon University}
\affil{\texttt{jtassaro@cs.cmu.edu}}
\date{}
\begin{document}
\maketitle
\begin{abstract}
  
  In this paper we present a method for obtaining tail-bounds for
  random variables satisfying certain probabilistic recurrences that
  arise in the analysis of randomized parallel divide and conquer
  algorithms. In such algorithms, some computation is initially done
  to process an input $\probsize$, which is then randomly split into
  subproblems $\sub_1(\probsize),\dots,\sub_n(\probsize)$, and the algorithm
  proceeds recursively in parallel on each subproblem. The total
  \emph{work} on input $\probsize$, $\work(\probsize)$, then satisfies a
  probabilistic recurrence of the form
  $\work(\probsize) = \tollwork(\probsize) + \sum_{i=1}^n\work(\sub_i(\probsize))$,
  and the \emph{span} (the longest chain of sequential dependencies), satisfies
  $\spn(\probsize) = \tollspan(\probsize) + \max_{i=1}^n\spn(\sub_i(\probsize))$,
  where $\tollwork(\probsize)$ and $\tollspan(\probsize)$ are the work and span 
  to split $\probsize$ and combine the results of the recursive calls.

  Karp has previously presented methods for obtaining tail-bounds in
  the case when $n = 1$, and under certain stronger assumptions for
  the work-recurrence when $n\geq 2$, but left open the question of
  the span-recurrence. We first show how to extend his
  technique to handle the span-recurrence. We then show that in some
  cases, the work-recurrence can be bounded under simpler
  assumptions than Karp's by transforming it into a related
  span-recurrence and applying our first result. We demonstrate
  our results by deriving tail bounds for the work and span of quicksort and
  the height of a randomly generated binary search tree.

\end{abstract}
 \newpage

\section{Introduction}

The analysis of sequential divide-and-conquer algorithms
involves analyzing recurrences of the form 
\begin{equation}
\label{eqn:work-rec}
\work(\probsize) = \tollwork(\probsize) + \sum_{i=1}^n\work(\sub_i(\probsize))
\end{equation}
where $\tollwork(\probsize)$ is the ``toll'' cost to initially process an input
of size $\probsize$ and split it into smaller inputs
$\sub_1(\probsize),\dots,\sub_n(\probsize)$, which are then processed
recursively. When the functions $\sub_1, \dots, \sub_n$
are deterministic, there are a number of
well-known ``cook-book'' methods that can be used to obtain asymptotic bounds in many
cases~\cite{AkraB98, CLRS, Roura01}.

In algorithms where the recursive calls are done in parallel, another
measure of cost becomes important: the \emph{span}~\cite{BlellochGr95, BlumofeL99} measures
the longest sequential dependency in the execution, and is important
for measuring the speed-up of the algorithm as the number of
processors is increased. The recurrence for the span has the form:
\begin{equation}
\label{eqn:span-rec}
\spn(\probsize) = \tollspan(\probsize) + \max_{i=1}^{n}\spn(\sub_i(\probsize))
\end{equation}
where $\tollspan(\probsize)$ is the span for processing and dividing the input, but we take the maximum of the recursive calls instead of summing
because we are interested in the longest dependency.  These kinds of
recurrences involving maxima also arise when analyzing heights of
search trees or stack space usage of sequential algorithms.
 Again, when the $\sub_i$ are deterministic,
the above recurrence often simplifies to something of the form
$\spn(\probsize) = \tollspan(\probsize) + \spn(\sub_i(\probsize))$ for some
$\sub_i$, and so the usual cook-book methods can be used to obtain asymptotic
bounds.

However, when the $\sub_i(\probsize)$ are random variables (either because
the algorithm uses randomness when dividing the input into
subproblems, or when considering average case complexity),
then $\work(\probsize)$ and $\spn(\probsize)$ are themselves random variables and the
situation is more complicated. First, when considering the expected value of
the work, we can at least use linearity of expectation to get
\begin{equation}
\expected{\work(\probsize)} = \tollwork(\probsize) + \sum_{i=1}^n\expected{\work(\sub_i(\probsize))}
\end{equation}
and a variety of techniques can be used to solve the resulting recurrence. But in the case of span,
expectation does not commute with taking a maximum, so that in general
{$\expected{\max_{i=1}^n\spn(\sub_i(\probsize))} \neq \max_{i=1}^n\expected{\spn(\sub_i(\probsize))}$},
and we merely have
{$\expected{\max_{i=1}^n\spn(\sub_i(\probsize))} \geq \max_{i=1}^n\expected{\spn(\sub_i(\probsize))}$}.
As a result, the expected value of the span does not obey a simple
recurrence.  A second source of complexity is that we may be
interested not just in obtaining bounds on expectation, but may also
wish to know that with high probability the work and span are close to
this expected value.

Although a number of methods are known for obtaining results for
certain recurrences of these forms~\cite{Devroye86, Drmota01,
  DrmotaTrees, FlajoletSedgewick}, these often rely on sophisticated
tools from probability theory or analytic combinatorics and are not
simple cook-book results that can be easily applied.

In contrast to some of the more advanced techniques mentioned above,
Karp~\cite{Karp94} has presented relatively simple methods for
obtaining tail-bounds for work recurrences of the form in
\eqnref{eqn:work-rec}:

\begin{thm}[{\cite[Theorem 1.2]{Karp94}}]
\label{thm:karp-simple}

  Suppose $n=1$ in \eqnref{eqn:work-rec}, so that $\work(\probsize) =
  \tollwork(\probsize) + \work(\sub(\probsize))$ where $\sub(x)$ is a random variable
  such that $0 \leq h(x) \leq x$.  Furthermore,
  assume $\expected{\sub(\probsize)} \leq \meanbound(\probsize)$ where
  $0 \leq \meanbound(\probsize) \leq \probsize$ and
  $\meanbound(\probsize)/\probsize$ is non-decreasing. Suppose $\detrec$
  is a solution to the recurrence $\detrec(\probsize) =
  \tollwork(\probsize) + \detrec(\meanbound(\probsize))$.
  If $\tollwork$ and $\detrec$ are continuous and $\tollwork$ is strictly increasing
  on the set $\{x\ |\ \tollwork(x) >0\}$, then for all positive integers $w$,
\[
\probgt{\work(\probsize)}{\detrec(\probsize) + w \tollwork(\probsize)} 
  \leq
 \left(\frac{\meanbound(\probsize)}{\probsize}\right)^w
\]
\end{thm}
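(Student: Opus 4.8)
The plan is to establish the bound for every nonnegative integer $w$ by induction on $w$, peeling off one level of the recurrence at each step. The base case $w=0$ is trivial, since a probability is at most $1$. For the inductive step, write $\work(\probsize) = \tollwork(\probsize) + \work(\sub(\probsize))$; as $\tollwork(\probsize)$ is determined by $\probsize$, conditioning on the random value $s = \sub(\probsize)$ gives the identity $\probgt{\work(\probsize)}{\xi} = \expected{\probgt{\work(\sub(\probsize))}{\xi - \tollwork(\probsize)}}$, the expectation being over $\sub(\probsize)$. Taking the threshold $\xi = \detrec(\probsize) + w\,\tollwork(\probsize)$ and using $\detrec(\probsize) = \tollwork(\probsize) + \detrec(\meanbound(\probsize))$, the inner event rewrites as $\work(s) > \detrec(\meanbound(\probsize)) + w\,\tollwork(\probsize)$; in particular, evaluated at the full size $s = \probsize$ this threshold equals $\detrec(\probsize) + (w-1)\,\tollwork(\probsize)$, which is exactly the quantity the induction hypothesis controls. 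So the whole step reduces to passing the expectation over $\sub(\probsize)$ through the tail-bound function and extracting a single factor of $\meanbound(\probsize)/\probsize$.

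The heart of the argument is that extraction. Let $\boundfun(s)$ denote the size-indexed tail bound $s \mapsto \probgt{\work(s)}{\theta}$ for the fixed threshold $\theta = \detrec(\meanbound(\probsize)) + w\,\tollwork(\probsize)$, viewed on $s \in [0,\probsize]$. I would prove the extremal principle that, among all distributions for $\sub(\probsize)$ supported on $[0,\probsize]$ with $\expected{\sub(\probsize)} \le \meanbound(\probsize)$, the expectation $\expected{\boundfun(\sub(\probsize))}$ is largest for the two-point distribution placing mass $\meanbound(\probsize)/\probsize$ at $\probsize$ and the remaining mass at $0$. Because there is no work to do on an empty subproblem we have $\boundfun(0)=0$, so this worst case evaluates to $(\meanbound(\probsize)/\probsize)\,\boundfun(\probsize)$, and $\boundfun(\probsize) = \probgt{\work(\probsize)}{\detrec(\probsize) + (w-1)\,\tollwork(\probsize)}$ is precisely the quantity bounded by $(\meanbound(\probsize)/\probsize)^{w-1}$ under the induction hypothesis. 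Multiplying the two factors gives $(\meanbound(\probsize)/\probsize)^{w}$, closing the induction. Note that the two constraints $\sub(\probsize) \le \probsize$ and $\expected{\sub(\probsize)} \le \meanbound(\probsize)$ enter exactly here, and together they are what manufacture the ratio $\meanbound(\probsize)/\probsize$.

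The extremal principle reduces in turn to dominating $\boundfun$ on $[0,\probsize]$ by its chord through $(0,\boundfun(0))$ and $(\probsize,\boundfun(\probsize))$: once $\boundfun$ lies below that chord, taking expectations and using $\boundfun(0)=0$ with $\expected{\sub(\probsize)} \le \meanbound(\probsize)$ yields $\expected{\boundfun(\sub(\probsize))} \le (\meanbound(\probsize)/\probsize)\,\boundfun(\probsize)$ immediately. The main obstacle is justifying this domination, i.e. a convexity (equivalently, a ``$\boundfun(s)/s$ is non-decreasing'') property of the tail bound as a function of the input size. A naive estimate will not do: comparing a single recursive step directly against the mean $\meanbound(\probsize)$ produces only a vacuous Markov bound, so the argument must use the full strength of the hypotheses. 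This is where I expect the continuity of $\tollwork$ and $\detrec$ and the strict monotonicity of $\tollwork$ on its support to be needed, so that the clean continuous extension $(\meanbound(\probsize)/\probsize)^{(\xi - \detrec(\probsize))/\tollwork(\probsize)}$ of the claimed bound is well defined and invertible through $\detrecinv$; and, crucially, where I would invoke that $\meanbound(\probsize)/\probsize$ is non-decreasing, to propagate the required monotonicity from the child sizes back up to $\probsize$. Carrying out that propagation rigorously, rather than the surrounding bookkeeping, is the real content of the proof.
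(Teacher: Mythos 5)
Your extremal/chord idea is exactly the content of the paper's \lemref{lem:karp}, so your instinct about where the hypotheses $\sub(\probsize)\le\probsize$ and $\expected{\sub(\probsize)}\le\meanbound(\probsize)$ enter is right. The genuine gap is \emph{what} you apply that principle to: you apply chord domination to the \emph{true} tail function $\boundfun(s)=\probgt{\work(s)}{\theta}$, and for the true tails that property is false in general. Take the recursion to be deterministic, say $\sub(s)=s/2$ (so $\meanbound(s)/s=1/2$, all hypotheses hold); then $\work(s)$ is a deterministic number for each $s$, so for a fixed threshold $\theta$ the map $s\mapsto\probgt{\work(s)}{\theta}$ is a $\{0,1\}$-valued step function of $s$, and whenever its jump occurs strictly below $\probsize$ it exceeds every chord through the origin. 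No appeal to continuity of $\tollwork$ and $\detrec$ or to monotonicity of $\meanbound(\probsize)/\probsize$ can repair this, because your claim is about the unknown distribution of $\work$, not about the given data. So the reduction you propose (``once $\boundfun$ lies below that chord\dots'') passes through a statement that is simply not available, and the final inequality you need is left without justification.

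A second, related problem is the induction variable. Your hypothesis at step $w-1$ controls $\probgt{\work(s)}{\detrec(s)+(w-1)\tollwork(s)}$, i.e.\ tails at thresholds that scale with $s$, whereas the inductive step needs control of $\probgt{\work(s)}{\theta}$ at the \emph{fixed} threshold $\theta=\detrec(\probsize)+(w-1)\tollwork(\probsize)$ for every $s\le\probsize$ in the support of $\sub(\probsize)$; bridging the two by monotonicity loses exactly the factor of $\meanbound(\probsize)/\probsize$ you need (the vacuous bound you yourself flag). The paper's proof (Karp's) resolves both issues at once, and differently from your plan: it constructs an explicit analytic majorant $\boundfun_r(x)$, defined for \emph{all} thresholds $r$ by the formula with the ceiling $\left\lceil (r-\detrec(x))/\tollwork(x)\right\rceil$ and the interpolation factor $x/\detrecinv(\cdot)$ in \thmref{thm:karp}, and proves as a purely deterministic fact that $\boundfun_r(x)/x$ is non-decreasing on $(0,\detrecinv(r)]$ --- this is where the continuity hypotheses are actually spent. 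It then inducts not on $w$ but on recursion depth: approximants $\Krec{i}_r$ are defined by unfolding the recurrence $i$ times, one shows $\Krec{i}_r(\prob)\le\boundfun_r(\size(\prob))$ for all $r$ simultaneously, applying \lemref{lem:karp} to $f=\boundfun_{r-\tollwork(\size(\prob))}$ (never to true tail probabilities), and finally the termination hypothesis gives $K_r\le\sup_i\Krec{i}_r$. Repairing your argument essentially forces you into that structure: the chord property must be manufactured by constructing a bound function that has it, since the true tails need not; that construction and its monotonicity proof are the real content, and they are what your proposal leaves out.
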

In other words, if there is a single recursive call and we can suitably
bound the expected size of $\sub(\probsize)$, this
lets us derive tail-bounds for the work in terms of a simple
deterministic recurrence. Often $\meanbound(x)/x$ will be a constant fraction such as $1/2$ or $3/4$,
so that the recurrence for $\detrec$ can be solved straightforwardly.
A second result covers the case where there is more than one recursive call:
\begin{thm}[{\cite[Theorem 1.5]{Karp94}}]
\label{thm:karp-work-simple}

  Let $\work$ obey the recurrence in \eqnref{eqn:work-rec}.
  Suppose that for all $(\probsizeB_1,\dots,\probsizeB_n)$
  in the support of the joint distribution $(\sub_1(\probsize), \dots, \sub_n(\probsize))$, 
$\expected{\work(\probsize)} \geq \sum_{i=1}^n\expected{\work(\probsizeB_i)}$. 
Then for all positive $w$:
\[
\probge{\work(\probsize)}{(w + 1)\expected{\work(\probsize)}}
  <
  \frac{1}{e^w}
\]
\end{thm}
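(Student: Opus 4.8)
The plan is to prove a Chernoff-style tail bound by controlling the moment generating function of $\work(\probsize)$ and then invoking Markov's inequality. Write $\mu(\probsize) = \expected{\work(\probsize)}$ and, for $s \geq 0$, let $\phi_{\probsize}(s) = \expected{e^{s\work(\probsize)}}$. Everything rests on one uniform estimate, proved by induction over the (well-founded) recursion: for every $\probsize$ and every $0 \leq s < 1/\mu(\probsize)$,
\[
\phi_{\probsize}(s) \leq \frac{1}{1 - s\,\mu(\probsize)}.
\]
Granting this, Markov's inequality gives $\probge{\work(\probsize)}{t} \leq e^{-st}\phi_{\probsize}(s) \leq e^{-st}/(1 - s\,\mu(\probsize))$; taking $t = (w+1)\mu(\probsize)$ and optimizing the free parameter $s$ then produces a tail bound of the claimed exponential form.

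First I would derive a recursion for $\phi$. Conditioning on the first split $(\sub_1(\probsize), \dots, \sub_n(\probsize))$ and using that, conditional on the split, the subproblem costs $\work(\sub_i(\probsize))$ are mutually independent, the MGF of a sum factors as a product, so
\[
\phi_{\probsize}(s) = e^{s\,\tollwork(\probsize)}\,\expected{\prod_{i=1}^{n} \phi_{\sub_i(\probsize)}(s)},
\]
where the outer expectation is over the random split. Substituting the induction hypothesis $\phi_{\sub_i(\probsize)}(s) \leq 1/(1 - s\,\mu(\sub_i(\probsize)))$ then reduces the inductive step to two elementary facts.

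Next I would collapse the product with the Weierstrass inequality $\prod_i (1 - c_i) \geq 1 - \sum_i c_i$, applied with $c_i = s\,\mu(\sub_i(\probsize)) \in [0,1]$, obtaining $\prod_i \phi_{\sub_i(\probsize)}(s) \leq 1/(1 - s\sum_i \mu(\sub_i(\probsize)))$. Here the theorem's hypothesis is exactly what is needed: it guarantees $\sum_i \mu(\sub_i(\probsize)) \leq \mu(\probsize)$ at every point of the support, keeping the denominator positive. Writing $Z = \sum_i \mu(\sub_i(\probsize))$, which lies in $[0, \mu(\probsize)]$ and satisfies $\expected{Z} = \mu(\probsize) - \tollwork(\probsize)$, I would bound $\expected{1/(1 - sZ)}$ using convexity of $z \mapsto 1/(1 - sz)$: over distributions on $[0,\mu(\probsize)]$ with this fixed mean the expectation is maximized by the two-point distribution on the endpoints $\{0, \mu(\probsize)\}$. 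Multiplying the resulting estimate by the toll factor $e^{s\,\tollwork(\probsize)}$ and simplifying reduces the target inequality to $e^{\tau}(1 - \tau) \leq 1$ with $\tau = s\,\tollwork(\probsize)$, which holds for all $\tau$; this closes the induction. The base case is a leaf, where $\work(\probsize) = \tollwork(\probsize)$ is deterministic and the bound is just $e^{s\mu(\probsize)} \leq 1/(1 - s\mu(\probsize))$.

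The main obstacle is the inductive step, and specifically passing the bound through the expectation over the random split while correctly accounting for the toll. Bounding the product of subproblem MGFs and then taking the expectation too crudely lands at $1/(1 - s(\mu(\probsize) - \tollwork(\probsize)))$, which is too weak; the gain comes only from pairing the Weierstrass collapse with the convexity/extremal-distribution argument so that the toll is absorbed precisely through $e^{\tau}(1-\tau) \leq 1$. Two further points require care: the induction must be well-founded, so that the subproblem sizes strictly decrease and the leaf base case is actually reached, and the final choice of $s$ in the Markov step must be made carefully (the optimum is near $s = w/((w+1)\mu(\probsize))$) to extract the stated constant.
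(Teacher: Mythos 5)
You should know at the outset that the paper never proves this statement: it is quoted from Karp \cite{Karp94} (his Theorem 1.5), so the only comparison available is with Karp's original argument, which is not a moment-generating-function argument; your route is genuinely different. And most of your proposal is sound: the inductive invariant $\expected{e^{s\work(\probsize)}} \leq 1/(1 - s\mu(\probsize))$ for $0 \leq s < 1/\mu(\probsize)$, where $\mu(\probsize) = \expected{\work(\probsize)}$, is correct and can be proved exactly as you sketch --- conditional independence of the subcalls given the split (an implicit modeling assumption, but the standard one), the Weierstrass inequality $\prod_i(1-c_i) \geq 1 - \sum_i c_i$, the reduction of $\expected{1/(1-sZ)}$ to the extremal two-point law on $\{0,\mu(\probsize)\}$, and the closing inequality $e^{\tau}(1-\tau) \leq 1$ all go through. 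That part is a nice piece of work.

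The genuine gap is the final Markov step: it cannot deliver $e^{-w}$. Writing $t = s\mu(\probsize)$, Markov gives $e^{-(w+1)t}/(1-t)$, whose minimum over $t \in [0,1)$ is attained at $t = w/(w+1)$ --- exactly the $s$ you propose --- and equals $(w+1)e^{-w}$, which is strictly larger than the claimed bound $e^{-w}$ for every $w>0$. Moreover, this loss is not recoverable by any cleverer use of your invariant. First, the invariant alone does not imply the theorem: the random variable $X$ with $\probeq{X}{0} = \probeq{X}{2\nu} = 1/2$ has mean $\nu$ and satisfies $\expected{e^{sX}} = (1+e^{2s\nu})/2 \leq 1/(1-s\nu)$ for all $0 \leq s\nu < 1$ (equivalently $(1+e^u)(2-u) \leq 4$ on $[0,2)$, which holds because the derivative $e^u(1-u)-1$ is nonpositive there), and yet $\probge{X}{2\nu} = 1/2 > e^{-1}$; so no argument using only ``mean equals $\mu$ and MGF at most $1/(1-s\mu)$'' can even prove the case $w=1$. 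Second, the invariant cannot be strengthened under the theorem's hypotheses: the single-call recurrence in which $\sub(\probsize)$ is, with probability $p$, an instance of the same expected cost and, with probability $1-p$, a terminal one, satisfies the hypothesis $\expected{\work(\probsize)} \geq \sum_i \expected{\work(\probsizeB_i)}$, and as $p \to 1$ its MGF converges pointwise to $1/(1-s\mu)$. So your method, as designed, proves $\probge{\work(\probsize)}{(w+1)\expected{\work(\probsize)}} \leq (w+1)e^{-w}$ --- a respectable but strictly weaker bound --- and closing the factor of $w+1$ requires an inductive invariant genuinely stronger than an MGF bound, which is what Karp's own (quite different) proof supplies.
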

Both of these results are proved in the more general setting where the distributions of the $\sub_i$ (and hence $\work$) may depend on more than just the size of the input, that is we consider recurrences:
\begin{equation}
\label{eqn:work-rec-general}
\work(\prob) = \tollwork(\size(\prob)) + \sum_{i=1}^n\work(\sub_i(\prob))
\end{equation}
where $\size(\prob)$ gives the size of input $\prob$.

Despite the generality of the above results, Karp notes that the assumption
$\expected{\work(\probsize)} \geq \sum_{i=1}^n\expected{\work(\probsizeB_i)}$
in \thmref{thm:karp-work-simple} is not satisfied in some important cases. Even when it is satisfied, it may be difficult to show since we need to know more than just the asymptotic value of $\expected{\work(\probsize)}$. Moreover, Karp mentions the problem of handling the span-recurrences of the form
in \eqnref{eqn:span-rec} but leaves it open.  Although
subsequent work~\cite{ChaudhuriD97} has presented an alternative to
\thmref{thm:karp-simple} that has some weaker assumptions, to our knowledge, no
follow-up has satisfactorily extended these results to span-recurrences or weakened the
assumption needed for \thmref{thm:karp-work-simple}.

In this paper we present a generalization of \thmref{thm:karp-simple} to
handle both span and work-recurrences for the general case of $n \geq
1$, without the restrictive assumption needed in \thmref{thm:karp-work-simple}.
There are two key aspects of our results that distinguish them:

\begin{itemize}
\item Rather than having a function $\meanbound$ such that
  $\expected{\sub(\probsize)} \leq \meanbound(\probsize)$, we require
  $\expected{\max_{i=1}^n\sub_i(\probsize)} \leq
  \meanbound(\probsize)$. That is, $\meanbound(x)$ must bound the
  expectation of the \emph{maximum} of the problem sizes.  In many
  applications, bounding this is not significantly more difficult than
  the bound needed in \thmref{thm:karp-simple}.
  
\item 
  Instead of checking that for all $(\probsizeB_1, \dots, \probsizeB_n)$
  in the support of the distribution $(\sub_1(\probsize), \dots, \sub_n(\probsize))$, 
$\expected{\work(\probsize)} \geq \sum_{i=1}^n\expected{\work(\probsizeB_i)}$,
  our theorems are parameterized by functions $\superadd$ and $\tolldiv$
  for which we require
$\superadd(\probsize) \geq \sum_{i=1}^n\superadd(\probsizeB_i)$
and
$\tolldiv(\probsize) \geq \sum_{i=1}^n\tolldiv(\probsizeB_i))$.
Often, it is easier to exhibit $\superadd$ and $\tolldiv$ 
with these properties than to check that the expected value has them.

\end{itemize}
Simplified forms of our main results are:

\begin{thm}
\label{thm:span-simple}
  Suppose $\spn(\probsize) = \tollspan(\probsize) +
  \max_{i=1}^n\spn(\sub_i(\probsize))$, where for all $\probsize$,
  there exists some $k$ such that the recurrence terminates after at
  most $k$ recursive calls.
Let $\superadd$ be a monotone function such that 
for all $(\probsizeB_1,\dots,\probsizeB_n)$
in the support of the distribution $(\sub_1(\probsize), \dots, \sub_n(\probsize))$, 
$\sum \superadd(\probsizeB_i) \leq \superadd(\probsize)$.
  Furthermore, assume $\expected{\max_{i=1}^n\sub_i(\probsize)} \leq \meanbound(\probsize)$ where
  $0 \leq \meanbound(\probsize) \leq \probsize$ and
  $\meanbound(\probsize)/\probsize$ is non-decreasing. Assume $\detrec$
  is a solution to the recurrence $\detrec(\probsize) =
  \tollspan(\probsize) + \detrec(\meanbound(\probsize))$.
  If $\tollspan$ and $\detrec$ are continuous and $\tollspan$ is strictly increasing
  on the set $\{x\ |\ \tollspan(x) >0\}$, then for all positive integers $w$ and $\probsize$ such that $\superadd(\probsize) \geq 1$,
\[
\probgt{\spn(\probsize)}{\detrec(\probsize) + w \tollspan(\probsize)} 
  \leq
 \superadd(\probsize)\left(\frac{\meanbound(\probsize)}{\probsize}\right)^w
\]
\end{thm}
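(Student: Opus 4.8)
The plan is to prove, by the same template Karp uses for \thmref{thm:karp-simple}, the pointwise multiplicative estimate
\[
\probgt{\spn(\probsize)}{\detrec(\probsize) + w\,\tollspan(\probsize)} \le \superadd(\probsize)\left(\frac{\meanbound(\probsize)}{\probsize}\right)^w,
\]
reading off the theorem in the stated regime $\superadd(\probsize)\ge 1$. Before starting I would record the monotonicity facts I expect to lean on: unrolling $\detrec(\probsize)=\tollspan(\probsize)+\detrec(\meanbound(\probsize))$ exhibits $\detrec$ as a sum of non-negative tolls, which together with $\meanbound(\probsize)=\probsize\cdot(\meanbound(\probsize)/\probsize)$ being a product of non-decreasing factors makes $\detrec$, $\meanbound$, and $\tollspan$ all non-decreasing; and since $\meanbound(\probsize)/\probsize$ is non-decreasing with $\meanbound(\probsize)\le\probsize$, any child size $\probsizeB_i\le\meanbound(\probsize)$ has $\meanbound(\probsizeB_i)/\probsizeB_i\le\meanbound(\probsize)/\probsize$. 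The base case is a problem on which recursion has already terminated, where $\spn=0$. For the inductive step I would condition on the first split and unroll one level: writing $\probsizeB_i$ for the random size of $\sub_i(\probsize)$ and using $\detrec(\probsize)-\tollspan(\probsize)=\detrec(\meanbound(\probsize))$, the event $\spn(\probsize)>\detrec(\probsize)+w\,\tollspan(\probsize)$ is exactly the event that \emph{some} child satisfies $\spn(\sub_i(\probsize))>\detrec(\meanbound(\probsize))+w\,\tollspan(\probsize)$, so a union bound turns the maximum into a sum over the $n$ children.

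The key to assembling the pieces, and the step where the hypotheses $\superadd$ and $\expected{\max_i\probsizeB_i}\le\meanbound(\probsize)$ fit together, is to prove a stronger per-child bound that is \emph{linear} in the realized child size: for $w\ge 1$,
\[
\probgt{\spn(\sub_i(\probsize))}{\detrec(\meanbound(\probsize)) + w\,\tollspan(\probsize)} \le \superadd(\probsizeB_i)\,\frac{\probsizeB_i}{\probsize}\left(\frac{\meanbound(\probsize)}{\probsize}\right)^{w-1}.
\]
Linearity is precisely what the mean hypothesis can exploit. Since the right-hand side is increasing in $\probsizeB_i$ and carries the weight $\superadd(\probsizeB_i)$, the superadditivity $\sum_i\superadd(\probsizeB_i)\le\superadd(\probsize)$ collapses the union bound to $\superadd(\probsize)\,(\max_i\probsizeB_i/\probsize)(\meanbound(\probsize)/\probsize)^{w-1}$ without reintroducing the branching factor $n$, and taking expectations with $\expected{\max_{i=1}^n\probsizeB_i}\le\meanbound(\probsize)$ yields exactly $\superadd(\probsize)(\meanbound(\probsize)/\probsize)^w$, closing the induction. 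Note the bound is automatically consistent: when $\probsizeB_i\le\meanbound(\probsize)$ (the child shrank at least as fast as expected) it is stronger than the multiplicative target, while when $\probsizeB_i>\meanbound(\probsize)$ it is weaker pointwise but is rescued in expectation by the mean constraint.

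The remaining work, and the step I expect to be the main obstacle, is establishing the per-child linear bound itself. This is the genuine generalization of the core of Karp's argument, and it is a \emph{cross-scale} estimate: the threshold $\detrec(\meanbound(\probsize))+w\,\tollspan(\probsize)$ is calibrated to the parent scale $\probsize$, whereas the child of size $\probsizeB_i$ runs a self-similar span recurrence against its \emph{own} deterministic solution. I would prove it by an inner induction on $w$, translating between the two scales using continuity and strict monotonicity of $\tollspan$ (to invert it and to interpolate the now-fractional number of ``extra'' toll levels between $w$ and $w-1$) together with monotonicity of $\detrec$, and feeding the $(w-1)$ case back through the one-level unrolling to manufacture the extra factor $\probsizeB_i/\probsize$.

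The $n=1$, $\superadd\equiv 1$ specialization of this plan recovers \thmref{thm:karp-simple} (each branch down the tree is a single-call process of the kind Karp analyzes, since $\expected{\max_i\probsizeB_i}\le\meanbound(\probsize)$ forces $\expected{\probsizeB_i}\le\meanbound(\probsize)$ on every branch), which is a useful consistency check while deriving the linear bound. Two secondary subtleties I would watch throughout: reconciling the hypothesis $\superadd(\probsize)\ge 1$ with children for which $\superadd(\probsizeB_i)<1$, where the inductive bound is no longer a standalone probability estimate and only its aggregate through the union bound is meaningful; and confirming that the fractional interpolation in the cross-scale step never costs more than the single factor of the size ratio that the mean constraint is able to supply.
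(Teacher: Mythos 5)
Your high-level architecture is sound and is, in spirit, the paper's own: your ``assembly'' step --- union bound over children, the weights $\superadd(\probsizeB_i)$, superadditivity $\sum_i \superadd(\probsizeB_i) \leq \superadd(\probsize)$, and then $\expected{\max_i \probsizeB_i} \leq \meanbound(\probsize)$ --- is exactly the content of the paper's generalization of Karp's lemma (\lemref{lem:ours}), and your per-child linear estimate is precisely what the paper's \thmref{thm:span} yields when combined with the key property of Karp's bounding function: with $r' = \detrec(\meanbound(\probsize)) + w\tollspan(\probsize)$, monotonicity of $\boundfun_{r'}(y)/y$ gives $\boundfun_{r'}(\probsizeB_i) \leq (\probsizeB_i/\probsize)\,\boundfun_{r'}(\probsize) = (\probsizeB_i/\probsize)\left(\meanbound(\probsize)/\probsize\right)^{w-1}$. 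So you have correctly isolated the statement that closes the induction.

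The genuine gap is that this per-child ``cross-scale'' bound --- which you yourself flag as the main obstacle --- is never proved, and your sketch omits the one mechanism that makes it true. Your inner induction on $w$ does close its inductive step (unroll the child one level, compare thresholds using monotonicity of $\tollspan$, and use $\meanbound(\probsizeB_i)/\probsizeB_i \leq \meanbound(\probsize)/\probsize$), but the base case $w=1$, namely $\probgt{\spn(\probsizeB_i)}{\detrec(\probsize)} \leq \superadd(\probsizeB_i)\,\probsizeB_i/\probsize$, concentrates essentially all of the theorem's difficulty: unrolling it produces thresholds $\detrec(\probsize) - \tollspan(\probsizeB_i)$ that are not aligned with the deterministic solution at any scale (your ``fractional levels''), and chasing them regresses through the scale sequence $\probsize, \meanbound(\probsize), \meanbound^2(\probsize), \dots$ rather than terminating; making that regress well-founded requires explicitly coupling it to the recursion-depth bound, which your sketch does not set up. This is exactly where Karp and the paper invest their technical effort: $\boundfun_r$ is defined for arbitrary real $r$ via the ceiling $\lceil (r - \detrec(x))/\tollspan(x)\rceil$, and the fact that $\boundfun_r(x)/x$ is non-decreasing on $(0, \detrecinv(r)]$ is proved by cutting that interval into pieces on which the ceiling is constant and using the continuity hypotheses on $\tollspan$ and $\detrec$ to glue at the endpoints. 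Your appeal to ``continuity and strict monotonicity of $\tollspan$ to invert it and interpolate'' names this problem but supplies no argument for it, so the proof as proposed is incomplete at its crux.
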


\begin{thm}
\label{thm:work-simple}
Suppose $\work(\probsize) = \tollwork(\probsize) +
\sum_{i=1}^n\work(\sub_i(\probsize))$, and the same assumptions about
termination of the recurrence and $m(x)$ hold as in
\thmref{thm:span-simple}.  Let $\superadd$ and $\tolldiv$ be
functions such that for all $(\probsizeB_1,\dots,\probsizeB_n)$
in the support of the distribution $(\sub_1(\probsize), \dots,
\sub_n(\probsize))$,
$\sum \superadd(\probsizeB_i) \leq \superadd(\probsize)$
and
$\sum \tolldiv(\probsizeB_i) \leq \tolldiv(\probsize)$.
Now, let $\detrec(\probsize)$ be a solution to the recurrence
$\detrec(\probsize) = \tollwork(\probsize)/\tolldiv(\probsize) +
\detrec(\meanbound(\probsize))$.  If $\tollwork/\tolldiv$ and
$\detrec$ are continuous and $\tollwork/\tolldiv$ is strictly
increasing on the set $\{x\ |\ \tollwork(x)/\tolldiv(x) >0\}$, then
for all positive integers $w$ and $\superadd(\probsize) \geq 1$:
\[
\probgt{\work(\probsize)}{\tolldiv(\probsize)\detrec(\probsize) + w \tollwork(\probsize)} 
  \leq
 \superadd(\probsize)\left(\frac{\meanbound(\probsize)}{\probsize}\right)^w
\]
\end{thm}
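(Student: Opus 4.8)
The plan is to deduce \thmref{thm:work-simple} from \thmref{thm:span-simple} by transforming the work-recurrence into a span-recurrence, exactly as advertised in the introduction. The guiding idea is that dividing $\work$ by the superadditive function $\tolldiv$ converts the sum over subproblems into something controlled by a maximum: I would study the rescaled random variable $\work(\probsize)/\tolldiv(\probsize)$ and show that it is dominated by a random variable obeying a span-recurrence whose toll is $\tollwork(\probsize)/\tolldiv(\probsize)$, so that the tail bound of \thmref{thm:span-simple} transfers.

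First I would establish the inequality that turns the sum into a max. For any realization $(\probsizeB_1,\dots,\probsizeB_n)$ of $(\sub_1(\probsize),\dots,\sub_n(\probsize))$, I would write each summand as $\work(\probsizeB_i) = \tolldiv(\probsizeB_i)\,v_i$ with $v_i = \work(\probsizeB_i)/\tolldiv(\probsizeB_i) \geq 0$, and apply the elementary bound $\sum_i c_i v_i \leq \bigl(\sum_i c_i\bigr)\max_i v_i$ for nonnegative weights $c_i = \tolldiv(\probsizeB_i)$. The superadditivity hypothesis $\sum_i \tolldiv(\probsizeB_i) \leq \tolldiv(\probsize)$ then yields
\[
\sum_{i=1}^n \work(\probsizeB_i) \;\leq\; \tolldiv(\probsize)\,\max_{i=1}^n \frac{\work(\probsizeB_i)}{\tolldiv(\probsizeB_i)}.
\]
Dividing the work-recurrence through by $\tolldiv(\probsize)$ gives
\[
\frac{\work(\probsize)}{\tolldiv(\probsize)} \;\leq\; \frac{\tollwork(\probsize)}{\tolldiv(\probsize)} + \max_{i=1}^n \frac{\work(\sub_i(\probsize))}{\tolldiv(\sub_i(\probsize))},
\]
which is a span-recurrence with toll $\tollspan(\probsize)=\tollwork(\probsize)/\tolldiv(\probsize)$, but with $\leq$ in place of $=$.

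To handle the inequality cleanly, I would define, on the same probability space and using the same random splits $\sub_i$, an auxiliary variable $\spn'$ obeying the equality span-recurrence $\spn'(\probsize) = \tollwork(\probsize)/\tolldiv(\probsize) + \max_{i=1}^n \spn'(\sub_i(\probsize))$, with empty maxima at the leaves taken to be $0$. By induction on the recursion depth (finite, since the recurrence terminates within $k$ calls), the displayed inequality shows $\work(\probsize)/\tolldiv(\probsize) \leq \spn'(\probsize)$ pointwise. I would then apply \thmref{thm:span-simple} to $\spn'$ with $\tollspan = \tollwork/\tolldiv$ and the same $\superadd$, $\meanbound$, and $\detrec$; its hypotheses (termination, monotone superadditive $\superadd$, the conditions on $\meanbound$, and continuity and strict monotonicity of $\tollwork/\tolldiv$) are precisely those assumed here, and $\detrec$ solves exactly the required recurrence. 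This gives
\[
\probgt{\spn'(\probsize)}{\detrec(\probsize) + w\,\tollwork(\probsize)/\tolldiv(\probsize)} \;\leq\; \superadd(\probsize)\left(\frac{\meanbound(\probsize)}{\probsize}\right)^w.
\]
Finally, using $\work(\probsize)/\tolldiv(\probsize) \leq \spn'(\probsize)$ and multiplying the threshold through by $\tolldiv(\probsize)$, the event $\{\work(\probsize) > \tolldiv(\probsize)\detrec(\probsize) + w\,\tollwork(\probsize)\}$ is contained in the event bounded above, which yields the claim.

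The main obstacle I anticipate is the bookkeeping around positivity of $\tolldiv$: the rescaling $\work/\tolldiv$ and the weighted-average step implicitly require $\tolldiv(\probsize) > 0$ and $\tolldiv(\sub_i(\probsize)) > 0$ on the relevant part of the support. I would therefore either assume strict positivity of $\tolldiv$ or argue that subproblems with $\tolldiv = 0$ contribute no work and can be dropped from both the sum and the maximum without changing either side of the inequality. Beyond this edge case, the argument is a reduction, and all of the probabilistic content is carried by \thmref{thm:span-simple}.
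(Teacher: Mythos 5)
Your proposal is correct and follows essentially the same route as the paper: the paper also defines the rescaled variable $\worktrans(\prob) = \work(\prob)/\tolldiv(\size(\prob))$, uses the same weighted-average inequality together with $\sum_i \tolldiv(\size(\sub_i(\prob))) \leq \tolldiv(\size(\prob))$ to show $\worktrans$ obeys a span-like recurrence with toll $\tollwork/\tolldiv$, and then invokes the span theorem. The only cosmetic difference is that the paper states its span theorem directly for recurrences given as CDF inequalities, so your auxiliary dominating variable $\spn'$ (and the equality-versus-inequality patch) is unnecessary there, and your positivity concern is handled in the paper by simply requiring $\tolldiv : \mathbb{R} \to \mathbb{R}^+$.
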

Notice that the tail bound has an additional factor of $\superadd(x)$
compared with \thmref{thm:karp-simple}.  Generally we will use
$\superadd(x) = x$, and we will be interested in $w$ such that the
term $(\meanbound(\probsize)/\probsize)^w$ is already exponentially
small compared to $\superadd(x)$, so we will still be able to show
that the desired bound holds with high probability. Note that \thmref{thm:karp-simple} is indeed a special case of these results, since in the degenerate case where $n=1$, we may take $\superadd(\probsize) = \tolldiv(\probsize) = 1$, and we obtain \thmref{thm:karp-simple}.   Moreover,  we prove a more
general form that handles the case when the distribution of the
$\sub_i$ may depend on more than just the problem size.

Our proofs are based on the proof of
\thmref{thm:karp-simple}, so we begin by outlining Karp's proof of
\thmref{thm:karp-simple} in a more general setting than the version
quoted above (\Sref{sec:karp}). We focus only on those parts that will subsequently change in our extensions. Next, we show how to extend the proof
to handle span recurrences (\Sref{sec:span}) to obtain
\thmref{thm:span-simple}. Then, we show that in certain cases,
bounding a work recurrence can be reduced to bounding span recurrences
(\Sref{sec:work}), so that we can deduce \thmref{thm:work-simple} from \thmref{thm:span-simple}. We then apply our method to some examples
(\Sref{sec:applications}). Finally, we discuss related work
(\Sref{sec:conclusion}).
\section{Karp's method}
\label{sec:karp}

In this section we give Karp's proof of a more general version of
\thmref{thm:karp-simple}, where $\work$ and $\sub$ are functions of
problem instances rather than just their sizes. Let $\probDom$ be the
domain of problem inputs, and let $\size : \probDom \to \mathbb{R^+}$
give the size of a problem. Then, we are interested in a recurrence of
the form:
\begin{equation}
\label{eqn:work-rec-unary-general}
\work(\prob) = \tollwork(\size(\prob)) + \work(\sub(\prob))
\end{equation}
where $\sub(\prob)$ is a random variable such that $0 \leq \size(\sub(\prob)) \leq \size(\prob)$. Let us first clarify what we mean by such a probabilistic recurrence. We do not mean that
$\work(\prob)$, as a function from the sample space $\Omega$ to $\mathbb{R}$, is equal to the expression on the right; rather, this is a statement about equality of \emph{distributions}. That is, for all $\prob$ and $r \in \mathbb{R}$:
\begin{equation}
\probgt{\work(\prob)}{r} = \sum_{\probB} \probeq{\sub(\prob)}{\probB} \cdot \probgt{\work(\probB)}{r - \tollwork(\size(\prob))}
\end{equation}
Since we are interested in bounding $\probgt{\work(z)}{r}$, the recurrence only needs to be an inequality:
\begin{equation}
\label{eqn:work-rec-unary-general-cdf}
\probgt{\work(\prob)}{r} \leq \sum_{\probB} \probeq{\sub(\prob)}{\probB} \cdot \probgt{\work(\probB)}{r - \tollwork(\size(\prob))}
\end{equation}
The proof relies on the following lemma:

\begin{lemma}[{\cite[Lemma 3.1]{Karp94}}]
\label{lem:karp}
Let $X$ be a random variable with values in the range $[0,
  x]$. Suppose $f : \mathbb{R} \to \mathbb{R}$ is a
non-negative function such that $f(0) = 0$, and there exists some
constant $c$ such that for all $y \geq c$, $f(y) = 1$ and
 $f(y)/y$ is non-decreasing on the interval $(0,c]$. Then:
\[
  \expected{f(X)} \leq\frac{\expected{X}f(\min(x, c))}{\min(x,c)}
\]
\end{lemma}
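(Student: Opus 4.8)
The plan is to reduce the statement to a single pointwise inequality that can then be integrated against the distribution of $X$. Writing $M = \min(x,c)$, I would first establish that the \emph{linear} function $\ell(y) = \frac{f(M)}{M}\,y$ dominates $f$ on the entire range $[0,x]$ of $X$, that is, $f(y) \le \frac{f(M)}{M}\,y$ for every $y \in [0,x]$. Once this is in hand, substituting $X$ and taking expectations, together with linearity and monotonicity of expectation, immediately yields $\expected{f(X)} \le \frac{f(M)}{M}\expected{X} = \frac{\expected{X}f(\min(x,c))}{\min(x,c)}$, which is exactly the claimed bound.

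To prove the pointwise inequality I would split the range into three regimes. The point $y = 0$ is immediate since $f(0) = 0$ and $\ell(0) = 0$. For $0 < y \le M$, I use that $M \le c$, so both $y$ and $M$ lie in $(0,c]$ with $y \le M$; the hypothesis that $f(y)/y$ is non-decreasing on $(0,c]$ then gives $f(y)/y \le f(M)/M$, i.e. $f(y) \le \frac{f(M)}{M}\,y$. The only remaining case is $y > M$, which can occur only when $M = c < x$ (otherwise $M = x$ and every $y \le x = M$ is already covered). In that subcase $M = c$ forces $f(M) = f(c) = 1$ and hence $\frac{f(M)}{M} = \frac{1}{c}$, while $f(y) = 1$ because $y > c$; since $y > c$ we have $\frac{1}{c}\,y > 1 = f(y)$, so the linear envelope still dominates.

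Assembling the three cases shows $f \le \ell$ throughout $[0,x]$, after which monotonicity of expectation completes the argument. The one place that genuinely requires care — and the only place the full strength of the hypotheses is exercised — is the $y > M$ regime: it is precisely the assumption that $f$ saturates to the constant $1$ beyond $c$ that keeps the linear envelope $\ell$ above $f$ in the region where the monotonicity of $f(y)/y$ no longer provides control. I would also flag the mild nondegeneracy condition $M > 0$ (equivalently $x > 0$ and $c > 0$) needed for the stated quotient to make sense; the degenerate case $x = 0$ is trivial, since then $X = 0$ almost surely and both sides vanish.
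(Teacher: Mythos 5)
Your proof is correct and follows essentially the same route as the paper's: both establish the pointwise linear bound $f(y) \le \frac{y\,f(\min(x,c))}{\min(x,c)}$ on $[0,x]$ by the same case analysis (the $f(0)=0$ point, monotonicity of $f(y)/y$ below $\min(x,c)$, and saturation $f(y)=1$ above $c$), then conclude by monotonicity of expectation. Your reorganization of the cases around $M=\min(x,c)$ rather than around $x\le c$ versus $x>c$, and your remark on the degenerate case $x=0$, are cosmetic differences only.
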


\begin{proof}
We start by showing that for all $y$ such that $0 \leq y \leq x$,

\[f(y) \leq \frac{yf(\min(x,c))}{\min(x,c)} \]
If $y = 0$ the result is immediate since $f(0) = 0$. Otherwise, if $x \leq c$, then we have $f(y)/y \leq f(x)/x$, and the result follows. For the case where $ x > c$, we have two sub-cases:

\begin{itemize}
\item If $y \leq c$, then we again have $f(y)/y \leq f(c)/c$ and so we are done.
  
\item If $y > c$, then $\displaystyle f(y) = 1 = f(c) \leq f(c)\frac{y}{c}$.
\end{itemize}
Letting $F$ be the CDF of $X$, we have:

\[
  \expected{f(X)} = \int_{0}^x f(y)\, dF(y) \leq \int_{0}^x \frac{yf(\min(x,c))}{\min(x,c)}\, dF(y)
                  = \frac{\expected{X}f(\min(x,c))}{\min(x,c)}
\]

\end{proof}

\begin{thm}[{\cite[Theorem 3.3]{Karp94}}]
\label{thm:karp}

  Let $\work$ satisfy the recurrence in
  \eqnref{eqn:work-rec-unary-general-cdf} for $\prob$ such that $\size(\prob) > \recterm$. That is, there is some point $\recterm$ at which the recurrence ends.
  Assume $a(x) = 0$ for $x \in [0, \recterm]$ and $a(x)$
  is monotone and continuous on $(\recterm, \infty)$.
  Suppose $\expected{\sub(\probsize)} \leq \meanbound(\probsize)$
  where $0 \leq \meanbound(\probsize) \leq \probsize$ and
  $\meanbound(\probsize)/\probsize$ is non-decreasing. Assume $\detrec :
  \mathbb{R} \to \mathbb{R}$ is a monotone solution to the recurrence
  $\detrec(\probsize) \geq \tollwork(\probsize) +
  \detrec(\meanbound(\probsize))$ which is 
  continuous on $(\recterm, \infty)$. Let $\detrecinv : \mathbb{R} \to
  \mathbb{R^+}$ be an inverse of $\detrec$ above $\recterm$, that is:
  $\detrecinv(\detrec(x)) = x$ for $x > \recterm$, and similarly
  $\detrec(\detrecinv(y)) = y$ for $y > u(\recterm)$. Assume for all
  $x \leq \recterm$, $u(x) = u(\recterm)$, and $u(x)$ bounds the support of $\work(x)$ for $x \leq d$.
  
  Then for all $\prob$ and $r$, $\probgt{\work(\prob)}{r} \leq
  \boundfun_r(\size(\prob))$, where:
  
\begin{enumerate}
\item If $r \leq \detrec(\recterm)$ then $\boundfun_r(x) = 1$
\item If $r > \detrec(\recterm)$:
  \begin{enumerate}
  \item If $x \leq \recterm$ then $\boundfun_r(x) = 0$
  \item If $x > \recterm$ and $\detrec(x) \geq r$ then $\boundfun_r(x) = 1$
  \item If $x > \recterm$ and $\detrec(x) < r$ then
    \[ \boundfun_r(x) = \left(\frac{m(x)}{x}\right)^{\left\lceil \frac{r - \detrec(x)}{\tollwork(x)}\right\rceil}
                 \frac{x}{\detrecinv(r - \tollwork(x)
                           {\left\lceil \frac{r - \detrec(x)}{\tollwork(x)}\right\rceil})} \]
  \end{enumerate}
\end{enumerate}

When $r = \detrec(\probsize) + w\tollwork(\probsize)$ for integer $w$, $\boundfun_r(x)$ simplifies to the bound stated in \thmref{thm:karp-simple} in the introduction.
\end{thm}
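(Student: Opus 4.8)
The plan is to show that the claimed bounding function $\boundfun_r$ is a \emph{supersolution} of the probabilistic recurrence, and then transfer the bound onto $\work$ by induction on the recursion. First I would rewrite the right-hand side of \eqnref{eqn:work-rec-unary-general-cdf} as an expectation over the random subproblem,
\[ \probgt{\work(\prob)}{r} \leq \expected{\probgt{\work(\sub(\prob))}{r - \tollwork(\size(\prob))}}, \]
and settle the base case $\size(\prob) \leq \recterm$ directly: there $\work(\prob)$ is supported below $\detrec(\recterm)$, so the event $\work(\prob) > r$ is impossible when $r > \detrec(\recterm)$ (matching case~2a) and the tail is trivially at most $1$ when $r \leq \detrec(\recterm)$ (matching case~1). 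The global bound then follows by induction over the recursion tree, grounded in the termination at $\recterm$; one may formalize this by bounding each finite-depth truncation of $\work$ and passing to the limit.

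The heart of the argument is the inductive step for $x = \size(\prob) > \recterm$. Writing $r' = r - \tollwork(x)$ and applying the inductive hypothesis $\probgt{\work(\probB)}{r'} \leq \boundfun_{r'}(\size(\probB))$ inside the expectation reduces the goal to
\[ \expected{\boundfun_{r'}(\size(\sub(\prob)))} \leq \boundfun_r(x). \]
To bound the left side I would apply \lemref{lem:karp} to the random variable $X = \size(\sub(\prob)) \in [0,x]$ with $f(y) = \boundfun_{r'}(y)$, taking the threshold constant to be $c = \detrecinv(r')$, so that $f(y) = 1$ precisely when $\detrec(y) \geq r'$ (case~2b) and $f(0) = 0$ (case~2a). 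Using $\expected{X} \leq \meanbound(x)$, the lemma yields $\expected{\boundfun_{r'}(X)} \leq \meanbound(x)\,\boundfun_{r'}(\min(x,c))/\min(x,c)$, and a short calculation splitting on whether $x \leq c$ reproduces $\boundfun_r(x)$ exactly: when $x \leq c$ the ceiling exponent drops by one while the argument of $\detrecinv$ is unchanged, and when $x > c$ one has $\boundfun_{r'}(c) = 1$ with exponent $1$; both collapse to the closed form in case~2c.

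The main obstacle, and the step demanding the most care, is verifying that $f(y) = \boundfun_{r'}(y)$ meets the hypotheses of \lemref{lem:karp}, namely that $f(y)/y$ is non-decreasing on $(0,c]$. This is delicate because the exponent $\lceil (r' - \detrec(y))/\tollwork(y)\rceil$ is piecewise constant, so the formula changes shape across its breakpoints. I would prove monotonicity on each maximal interval where the exponent $w'$ is constant—there $f(y)/y$ factors as the non-decreasing quantity $(\meanbound(y)/y)^{w'}$ times the reciprocal of $\detrecinv$ evaluated at the non-increasing argument $r' - \tollwork(y)\,w'$ (non-increasing because $\tollwork$ is monotone), hence non-decreasing—and then check continuity of $f$ across the breakpoints (using continuity of $\tollwork$ and $\detrec$) and at $\recterm$ (where $\tollwork(\recterm) = 0$ sends the exponent to $\infty$ and drives $f$ to $0$), so that no downward jumps occur. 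Once this monotonicity is established the supersolution inequality holds and the induction closes; finally, specializing $r = \detrec(x) + w\tollwork(x)$ makes the ceiling equal $w$ and the $\detrecinv$ argument equal $\detrec(x)$, recovering the clean bound $(\meanbound(x)/x)^w$ of \thmref{thm:karp-simple}.
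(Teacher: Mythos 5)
Your proposal is correct and follows essentially the same route as the paper: the finite-depth truncations passed to a limit via the termination hypothesis are exactly the paper's $\Krec{i}_r$ construction, and you apply \lemref{lem:karp} with the identical choices $X = \size(\sub(\prob))$, $f = \boundfun_{r-\tollwork(\size(\prob))}$, and $c = \detrecinv(r-\tollwork(\size(\prob)))$, verifying monotonicity of $\boundfun_{r'}(y)/y$ by the same piecewise-constant-exponent-plus-continuity argument the paper sketches (and defers to Karp). In fact you supply slightly more detail than the paper does on the final case split $x \leq c$ versus $x > c$ showing the lemma's output collapses to $\boundfun_r(x)$.
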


\emph{Note:} The statement here is slightly different than Theorem 3.3
in \cite{Karp94}. In particular, there $\detrec$ is
a minimal, invertible solution to the deterministic recurrence and the result uses
the inverse $\detrec^{-1}$ rather than some
$\detrecinv$. Furthermore, $\tollwork$ is assumed to be strictly
increasing on $(\recterm, \infty)$ and continuous everywhere, and
$\meanbound$ is also continuous (so as to imply continuity and invertibility of
$\detrec$). This continuity and strictly increasing assumption rules out
the case where $\tollwork(x) = 1$ for $x > \recterm$ and $\tollwork(x)
= 0$ for $x \leq \recterm$, which arises in some examples, so Karp has
a separate result for this case.  However, the proof does not actually
rely on continuity of $\tollwork$ everywhere or that it be strictly increasing instead of merely monotone, so we prefer the
formulation above which avoids the need for an additional separate theorem.

\begin{proof}

We will only give the proof in the case where for all $\prob$, there
exists some $k$ such that $\probgt{\size(h^k(\prob))}{\recterm} =
0$. See \cite{CSTheoryYoung} for discussion of the proof in a more
general setting.
  
For concision, let us define $K_r(\prob) = \probgt{\work(\prob)}{r}$.
The strategy of the proof is to inductively define a sequence of
functions $\Krec{i}_r$ for $i \in \mathbb{N}$, then show that $K_r(\prob)$ is bounded by $\sup_{i}\Krec{i}_r(\prob)$, and finally prove by induction on $i$ that $\boundfun$ bounds each of the $\Krec{i}$. 

First, \eqnref{eqn:work-rec-unary-general-cdf} implies 
$K_r(\prob) \leq \expected{K_{r-\tollwork(\size(\prob))}(\sub(\prob))}$ when $\size(\prob) > \recterm$. We will use a similar ``recurrence'' to define $\Krec{i}$ by:
\begin{align*}
 \Krec{0}_r(\prob) &=
  \begin{cases} 
    1 & \text{if $r < \detrec(\recterm)$} \\
    0 & \text{otherwise}
  \end{cases} \\
 \Krec{i+1}_r(\prob) &= \expected{\Krec{i}_{r-\tollwork(\size(\prob))}(\sub(\prob))}
\end{align*}
For all $i$, $\Krec{i}_{r}(\prob) \leq \Krec{i+1}_{r}(\prob)$ and
$\Krec{i}_{r}(\prob) \leq 1$ so $\sup_{i}\Krec{i}_{r}(\prob)$
exists. First we note that if $\size(\prob) \leq \recterm$, then
$K_{r}(\prob) \leq \Krec{i}_r(\prob)$.

Next, we claim that if $\Krec{i}_{r}(\prob) < K_{r}(\prob)$ then
$\probgt{\size(\sub^i(\prob))}{\recterm} > 0$. The proof is by
induction on $i$.  If $i = 0$, then we must show
$\probgt{\size(\prob)}{\recterm} > 0$, but this is immediate since if
not we know $K_{r}(\prob) \leq \Krec{0}_{r}(\prob)$, contradicting our assumption of the opposite inequality. For the inductive case, 
$\Krec{i+1}_{r}(\prob) < K_{r}(\prob)$
implies 
$\expected{\Krec{i}_{r - \tollwork(\size(\prob))}(\sub(\prob))}
   < 
  \expected{K_{r-\tollwork(\size(\prob))}(\sub(\prob))}
$.
Hence for some $\probB$ such that $\probeq{\sub(\prob)}{\probB} > 0$, we must have $\Krec{i}_{r - \tollwork(\size(\prob))}(\probB) < K_{r-\tollwork(\size(\prob))}(\probB)$. By the induction hypothesis, we have that $\probgt{\size(\sub^i(\probB))}{\recterm} > 0$, hence $\probgt{\size(\sub^{i+1}(\prob))}{\recterm} \geq {\probeq{\sub(\prob)}{\probB} \cdot \probgt{\size(\sub^i(\probB))}{\recterm}} > 0$. 

This implies that $K_r(\prob) \leq \sup_i \Krec{i}_r(\prob)$, for if
not, then for all $i$ we would have $\Krec{i}_r(\prob) < K_r(\prob)$,
which would imply that for all $i$ there would be some non-zero
probability that the recursion would continue for at least $i$ calls, contradicting our hypothesis about $\work$.

We then show by induction on $i$ that $\Krec{i}_r(\prob) \leq
\boundfun_r(\size(\prob))$. A key part of the induction involves applying
\lemref{lem:karp} to the function $\boundfun_r$ for some particular $r$. To do
so, one must show that $\boundfun_r(x)/x$ is non-decreasing on the
interval $(0, \detrecinv(r)]$. We will not reproduce Karp's proof of
  this fact, since our later proofs will re-use this same
  $\boundfun_r$ function. Nevertheless, the idea is to first divide up $(0, \detrecinv(r)]$
into subintervals on which $\lceil\frac{r-\detrec(x)}{\tollwork(x)}\rceil$ is constant and show that $D_r(x)/x$ is non-decreasing on these intervals. Then, one shows that $D_r(x)/x$ is continuous at the ends of these subintervals, which proves monotonicity on $(0, \detrecinv(r)]$. This is where the continuity assumptions on $\tollwork$ and $\detrec$ are used.
  
From there, the base case of the induction showing $\Krec{i}_r(\prob) \leq \boundfun_r(\size(\prob))$ is straight-forward. For the inductive case, we have
\begin{align*}
\Krec{i+1}_r(\prob) 
 &= \expected{\Krec{i}_{r-\tollwork(\size(\prob))}(\sub(\prob))} \\
 &\leq \expected{\boundfun_{r-\tollwork(\size(\prob))}(\sub(\prob))} \\
 &\leq \expected{\size(\sub(\prob))}
       \frac{\boundfun_{r-\tollwork(\size(\prob))}(\min(\size(\prob),
                                             \detrecinv(r-\tollwork(\size(\prob)))))}
            {\min(\size(\prob), \detrecinv(r-\tollwork(\size(\prob))))} \\
 & \text{(by \lemref{lem:karp} with $X = \size(\sub(\prob))$, 
                                    $f = \boundfun_{r-\tollwork(\size(\prob))}$,
             and $c = \detrecinv(r-\tollwork(\size(\prob)))$)} \\
 &\leq \meanbound(\size(\prob))
       \frac{\boundfun_{r-\tollwork(\size(\prob))}(\min(\size(\prob),
                                  \detrecinv(r-\tollwork(\size(\prob)))))}
            {\min(\size(\prob), \detrecinv(r-\tollwork(\size(\prob))))} \\
 &= \boundfun_r(\size(\prob))
\end{align*}

\end{proof}
\section{Span Recurrences}
\renewcommand{\tollspan}{\tollwork}
\label{sec:span}

We now extend the result from the previous section to handle probabilistic span recurrences of the form:

\begin{equation*}
\label{eqn:span-rec-general}
\spn(\prob) \leq \tollspan(\size(\prob)) + \max_{i=1}^n\spn(\sub_i(\prob))
\end{equation*}
where $0 \leq \size(\sub_i(\prob)) \leq \size(\prob)$. In terms of the CDF of $\spn(\prob)$ this is:
\begin{equation}
\label{eqn:span-rec-general-cdf}
\probgt{\spn(\prob)}{r} \leq \sum_{\probB_1,\dots,\probB_n} \probeq{(\sub_1(\prob), \dots, \sub_n(\prob))}{(\probB_1, \dots, \probB_n)}
                   \cdot \probgt{\max_{i=1}^n\spn(\probB_i)}{r - \tollspan(\size(\prob))}
\end{equation}
Recall from the introduction that the key idea of our result is that we now require a bound on the expected \emph{maximum} of the sizes of the subproblems $\sub_1(\prob), \dots, \sub_n(\prob)$, and we introduce a parameter $\superadd$ such that $\superadd(\size(\probsize)) \geq \sum_{i=1}^n\superadd(\size(\probsizeB_i))$
for all $(\probsizeB_1, \dots, \probsizeB_n)$ in the support of the joint distribution $(\sub_1(\prob), \dots, \sub_n(\prob))$.
We start by proving a generalization of \lemref{lem:karp} which suggests the role that these assumptions will play:

\begin{lemma}
\label{lem:ours}
Let $X: \Omega \to [0, x]^n$ be a random vector where $n \in \mathbb{N^+}$. 
Let $g: \mathbb{R} \to \mathbb{R}^+$ 
be a function such that for all $(y_1, \dots, y_n)$ in the support of $X$, $0 \leq g(y_i)$
and $\sum_i g(y_i) \leq g(x)$.
Suppose $f : \mathbb{R} \to \mathbb{R}$ is a
non-negative function such that $f(0) = 0$, and there exists some
constant $c$ such that for all $y \geq c$, $f(y) = 1$ and
 $f(y)/y$ is non-decreasing on the interval $(0,c]$.
Then:
\[
  \expected{\sum_{i=1}^n g(X_i)f(X_i)} \leq g(x) \frac{\expected{\max_{i=1}^n X_i}f(\min(x, c))}{\min(x,c)}
\]
where $X_i$ is the $i$th component of the random vector $X$.
\end{lemma}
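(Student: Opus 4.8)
The plan is to follow the structure of Karp's proof of \lemref{lem:karp}, reusing the pointwise inequality it establishes and then exploiting the two new hypotheses on $g$. Recall that the proof of \lemref{lem:karp} shows, purely from the assumptions on $f$, that for every $y$ with $0 \leq y \leq x$,
\[
f(y) \leq \frac{y\,f(\min(x,c))}{\min(x,c)}.
\]
Since each component $X_i$ of the random vector $X$ takes values in $[0,x]$, this bound applies pointwise with $y = X_i$. That is the only place the structure of $f$ enters, so the remainder of the argument is elementary linear algebra under the integral.

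Next I would multiply this pointwise bound by $g(X_i)$. Here the hypothesis $g(X_i) \geq 0$ (valid on the support of $X$, which is all that matters under the expectation) is essential: it preserves the direction of the inequality, giving
\[
g(X_i)\,f(X_i) \leq \frac{f(\min(x,c))}{\min(x,c)}\, g(X_i)\, X_i
\]
almost surely, for each $i$. Summing over $i$ and then bounding each $X_i$ by $\max_j X_j$---again using $g(X_i) \geq 0$ to keep the inequality pointing the right way---yields
\[
\sum_{i=1}^n g(X_i)\, f(X_i) \leq \frac{f(\min(x,c))}{\min(x,c)} \left(\max_j X_j\right)\sum_{i=1}^n g(X_i).
\]
At this point the superadditivity-type hypothesis $\sum_i g(X_i) \leq g(x)$ lets me replace the trailing sum by $g(x)$, and taking expectations produces exactly the claimed bound, with $\expected{\max_j X_j}$ emerging from the factor $\max_j X_j$.

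I expect the only real subtlety---rather than a genuine obstacle---to be bookkeeping about where each hypothesis holds. Both $g(X_i) \geq 0$ and $\sum_i g(X_i) \leq g(x)$ are assumed only on the support of $X$, so the chain of inequalities above is an almost-sure statement; this is harmless because the expectation integrates only over the support. The one point that genuinely requires care is the sign of $g$: replacing $X_i$ with $\max_j X_j$ and pulling $g(x)$ out are both valid only because $g(X_i) \geq 0$, and the argument would collapse if $g$ were permitted to be negative. Everything else is a direct manipulation, so no analytic input beyond \lemref{lem:karp} is needed.
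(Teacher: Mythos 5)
Your proof is correct and follows essentially the same route as the paper's: reduce to a pointwise inequality on the support of $X$, reuse the bound $f(y) \leq y\,f(\min(x,c))/\min(x,c)$ from the proof of \lemref{lem:karp}, and invoke non-negativity and subadditivity of $g$ before taking expectations. The only (harmless) difference is ordering: the paper first extracts $\max_j f(y_j) \le f(\max_j y_j)$ via monotonicity of $f$ and applies the linear bound once at the maximum, whereas you apply the linear bound to each component and then pass to $\max_j X_j$, which lets you skip the observation that $f$ is monotone.
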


\begin{proof}
It suffices to show that for all $(y_1, \dots, y_n)$ in the support of $X$,
\[\sum_{i=1}^n g(y_i)f(y_i) \leq g(x) \frac{(\max_{i=1}^n y_i)f(\min(x,c))}{\min(x,c)} \]

We have:
\begin{align*}
\sum_{i=1}^n g(y_i)f(y_i) 
  &\leq \left(\max_{j=1}^n f(y_j)\right)\left(\sum_{i=1}^ng(y_i)\right) 
   \leq \left(\max_{j=1}^n f(y_j)\right) g(x)
\end{align*}
Monotonicity of $f(x)/x$ implies $f$ is monotone. Furthermore, by the argument in \lemref{lem:karp}, we know that for each $j$, $f(y_j)\leq \frac{y_j f(\min(x,c))}{\min(x,c)}$. Hence, we have:

\begin{align*}
\sum_{i=1}^n g(y_i)f(y_i) 
  &\leq \left(\max_{j=1}^n f(y_j)\right) g(x) \\
  &\leq f(\max_{j=1}^n y_j) g(x) \\
  &\leq \frac{\left(\max_{j=1}^n y_j\right)f(\min(x,c))}{\min(x,c)} g(x) \\
\end{align*}

\end{proof}

The proof of the next result has the same overall structure as that of \thmref{thm:karp}, except we use the above lemma in place of \lemref{lem:karp}.

\begin{thm}
\label{thm:span}

  Let $\spn$ satisfy the recurrence in
  \eqnref{eqn:span-rec-general-cdf} for $\prob$ such that
  $\size(\prob) > \recterm$. 
  Assume for all $\prob$ there exists $k$ such that for all
  $i_1, \dots, i_k$ we have
  $\probgt{\size(h_{i_1}(\dots(h_{i_k}(\prob))))}{\recterm} = 0$.
  Assume $a(x) = 0$ for $x \in [0, \recterm]$ and $a(x)$ is monotone
  and continuous on $(\recterm, \infty)$.  Suppose
  $\expected{\max_{i=1}^n \sub_i(\probsize)} \leq
  \meanbound(\probsize)$ where $0 \leq \meanbound(\probsize) \leq
  \probsize$ and $\meanbound(\probsize)/\probsize$ is non-decreasing.
  Let $\detrec$, $\detrecinv$ be as in \thmref{thm:karp}, under
  analogous assumptions.  Furthermore, let $\superadd : \mathbb{R}
  \rightarrow \mathbb{R}^+$ be a monotone function such that for all
  $x > \recterm$, $\superadd(x) \geq 1$. Assume for all $\prob$ such
  that $\size(\prob) > \recterm$, and all $(\probB_1, \dots,
  \probB_n)$ in the joint distribution of $(\sub_1(\prob),\dots,
  \sub_n(\prob))$: %
  \[ \sum_{i=1}^n \superadd(\size(\probB_i)) \leq \superadd(\size(\prob)) \]
  Then for all $\prob$ and $r$ such that $\superadd(\size(\prob)) \geq 1 $, we have $\probgt{\spn(\prob)}{r} \leq
  \superadd(\size(\prob))\cdot\boundfun_r(\size(\prob))$ where $\boundfun_r(\size(\prob))$ is as in \thmref{thm:karp}.
\end{thm}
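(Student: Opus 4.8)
The plan is to transcribe the proof of \thmref{thm:karp} essentially line for line, making only two changes: a union bound that linearizes the maximum into a sum, and an appeal to \lemref{lem:ours} in place of \lemref{lem:karp} at the inductive step. Write $K_r(\prob) = \probgt{\spn(\prob)}{r}$ and $r' = r - \tollspan(\size(\prob))$. Since $\probgt{\max_{i=1}^n \spn(\probB_i)}{s} \leq \sum_{i=1}^n \probgt{\spn(\probB_i)}{s}$, the span recurrence \eqnref{eqn:span-rec-general-cdf} yields, for $\size(\prob) > \recterm$,
\[ K_r(\prob) \leq \expected{\sum_{j=1}^n K_{r'}(\sub_j(\prob))}. \]
I would then define approximants $\Krec{i}_r$ by taking $\Krec{0}_r$ exactly as in \thmref{thm:karp} and $\Krec{i+1}_r(\prob) = \expected{\sum_{j=1}^n \Krec{i}_{r'}(\sub_j(\prob))}$, so that the $\Krec{i}$ satisfy the same union-bounded recurrence as the upper bound on $K_r$. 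Note that, unlike in Karp's setting, the $\Krec{i}_r$ need no longer be bounded by $1$; this is expected, since the target bound $\superadd(\size(\prob))\boundfun_r(\size(\prob))$ may itself exceed $1$.

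The next task is to re-establish, for the branching recursion, the two structural facts underlying Karp's argument that $K_r(\prob) \leq \sup_i \Krec{i}_r(\prob)$. Monotonicity $\Krec{i}_r \leq \Krec{i+1}_r$ in $i$ follows by a routine induction (the base case uses $n \geq 1$ together with $\tollspan \geq 0$), so the supremum exists once the uniform bound below is in hand. The delicate point --- and the step I expect to be the main obstacle --- is adapting Karp's ``deficit-propagation'' argument from a single chain of recursive calls to a path in the recursion tree. Concretely, I would show by induction on $i$ that $\Krec{i}_r(\prob) < K_r(\prob)$ forces the existence of a length-$i$ path of subproblem maps along which the size exceeds $\recterm$ with positive probability: comparing the union-bounded recurrence for $K_r$ with the definition of $\Krec{i+1}_r$, a strict deficit at level $i+1$ must be witnessed by some $\probB_j$ of positive probability carrying a level-$i$ deficit, which the induction hypothesis extends into a path. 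The termination hypothesis, which says that every length-$k$ sequence of maps drives the size to at most $\recterm$ with probability one, then forbids a deficit at every level, giving $K_r(\prob) \leq \sup_i \Krec{i}_r(\prob)$.

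With the structure in place, the heart of the proof is the induction establishing $\Krec{i}_r(\prob) \leq \superadd(\size(\prob))\boundfun_r(\size(\prob))$. The base case is immediate, since where $\Krec{0}_r = 1$ we have $r \leq \detrec(\recterm)$, so $\boundfun_r \equiv 1$, and $\superadd(\size(\prob)) \geq 1$ by hypothesis. For the inductive step, substituting the hypothesis into the definition of $\Krec{i+1}_r$ gives
\[ \Krec{i+1}_r(\prob) \leq \expected{\sum_{j=1}^n \superadd(\size(\sub_j(\prob)))\, \boundfun_{r'}(\size(\sub_j(\prob)))}, \]
which is exactly the quantity bounded by \lemref{lem:ours}, taking $X_j = \size(\sub_j(\prob))$, $g = \superadd$, $f = \boundfun_{r'}$, and $c = \detrecinv(r')$. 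Its hypotheses hold: $X \in [0, \size(\prob)]^n$ by the size constraint; the superadditivity $\sum_j \superadd(\size(\sub_j(\prob))) \leq \superadd(\size(\prob))$ is the standing assumption on $\superadd$; and $\boundfun_{r'}$ has the required shape ($\boundfun_{r'}(0) = 0$, $\boundfun_{r'} \equiv 1$ above $c$, and $\boundfun_{r'}(y)/y$ non-decreasing on $(0,c]$), which is precisely the property of $\boundfun$ verified within the proof of \thmref{thm:karp} and reused here verbatim. Applying \lemref{lem:ours} and then the bound $\expected{\max_j \size(\sub_j(\prob))} \leq \meanbound(\size(\prob))$ produces
\[ \Krec{i+1}_r(\prob) \leq \superadd(\size(\prob)) \cdot \meanbound(\size(\prob)) \frac{\boundfun_{r'}(\min(\size(\prob), c))}{\min(\size(\prob), c)}, \]
and the factor multiplying $\superadd(\size(\prob))$ is identical to the one Karp's proof collapses to $\boundfun_r(\size(\prob))$, so the same algebra closes the step. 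The crucial observation --- explaining why the extra factor does not compound with recursion depth --- is that \lemref{lem:ours} emits a single copy of $g(x) = \superadd(\size(\prob))$ regardless of $i$, so $\superadd$ is regenerated at each level rather than accumulated. Taking the supremum over $i$ then gives $\probgt{\spn(\prob)}{r} = K_r(\prob) \leq \superadd(\size(\prob))\boundfun_r(\size(\prob))$, as required.
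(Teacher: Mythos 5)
Your proposal follows the paper's strategy in outline (union bound, approximating sequence $\Krec{i}$, deficit propagation, \lemref{lem:ours} in place of \lemref{lem:karp}), but it diverges from the paper in one place where the divergence is fatal: you define the approximants without truncation at $1$. The paper sets $\Krec{i+1}_r(\prob) = \min\bigl(1, \sum_{j=1}^n\expected{\Krec{i}_{r-\tollwork(\size(\prob))}(\sub_j(\prob))}\bigr)$, and that $\min$ is load-bearing, not cosmetic. Without it, your induction claim --- which you quantify over all $r$, as your base case makes explicit --- is false: for $r < \detrec(\recterm)$, every shifted threshold $r'$ appearing in the unfolding stays below $\detrec(\recterm)$ (tolls are non-negative), so the uncapped recursion gives $\Krec{i}_r(\prob) = n^i$, which for $n \geq 2$ eventually exceeds $\superadd(\size(\prob))\boundfun_r(\size(\prob)) = \superadd(\size(\prob))$. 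Concretely, for quicksort with $\superadd(x) = x$ and a list of length $2$, already $\Krec{2}_r = 4 > 2$. Since the statement you are inducting on is false, no rewording of the inductive step can rescue it. The breakdown is also visible exactly where you invoke \lemref{lem:ours}: you assert that $f = \boundfun_{r'}$ satisfies $\boundfun_{r'}(0) = 0$, but when $r' \leq \detrec(\recterm)$ we are in case 1 of \thmref{thm:karp}, where $\boundfun_{r'} \equiv 1$, so $\boundfun_{r'}(0) = 1$ and the lemma does not apply (nor is $c = \detrecinv(r')$ meaningful there, since $\detrecinv$ is only an inverse above $\detrec(\recterm)$). Your aside that the $\Krec{i}$ ``need no longer be bounded by $1$'' and that this is expected is precisely the wrong instinct: $K_r$ is a probability, and the unbounded growth of the uncapped approximants is an artifact of applying the union bound repeatedly, not a feature of the quantity being bounded.

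The paper's $\min$ repairs all of this at once: the $\Krec{i}$ stay $\leq 1$, so the case $r < \detrec(\recterm)$ is disposed of trivially ($\Krec{i}_r \leq 1 \leq \superadd\cdot\boundfun_r$); the induction is run only for $r \geq \detrec(\recterm)$; and within that induction every sub-case with $r \leq \detrec(\size(\prob))$ is also trivial, because there $\boundfun_r(\size(\prob)) = 1$. The single remaining sub-case, $\size(\prob) > \recterm$ and $r > \detrec(\size(\prob))$, is the only place \lemref{lem:ours} is used, and there one verifies $r - \tollwork(\size(\prob)) \geq \detrec(\recterm)$ (from $\detrec(\size(\prob)) \geq \tollwork(\size(\prob)) + \detrec(\meanbound(\size(\prob))) \geq \tollwork(\size(\prob)) + \detrec(\recterm)$), so the induction hypothesis and the shape conditions on $\boundfun_{r'}$ are only ever invoked where they actually hold. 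Two further remarks. First, merely restricting your induction to $r \geq \detrec(\recterm)$ does not fix the uncapped version: in the sub-case $\detrec(\recterm) \leq r \leq \detrec(\size(\prob))$ the shifted threshold $r'$ can still drop below $\detrec(\recterm)$, and without the cap you have no usable bound there. Second, the cap cannot be retrofitted after the fact: $\min(1,\cdot)$ is concave, so $\min(1,\expected{Y}) \leq \expected{\min(1,Y)}$ fails in general, and the truncation must be built into the recursive definition of the $\Krec{i}$, exactly as the paper does. Finally, the step you single out as the main obstacle --- deficit propagation through the branching recursion --- is in fact unproblematic in both versions; with the cap one only adds the observation that $\Krec{i+1}_r(\prob) < K_r(\prob) \leq 1$ forces the minimum to be attained by the sum.
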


\begin{proof}
Set $K_r(\prob) = \probgt{\spn(\prob)}{r}$. By the union bound, \eqnref{eqn:span-rec-general-cdf} implies 
\[ K_r(\prob) \leq \sum_{i=1}^n \expected{K_{r-\tollspan(\size(\prob))}(h_i(\prob))} \]
As before, we will inductively define a family of functions $\Krec{i}$ using this same recurrence, show that their suprema bounds $K$, and then inductively bound the $\Krec{i}$. Define $\Krec{i}_r(\prob)$ by:
\begin{align*}
 \Krec{0}_r(\prob) &=
  \begin{cases} 
    1 & \text{if $r < \detrec(\recterm)$} \\
    0 & \text{otherwise}
  \end{cases} \\
 \Krec{i+1}_r(\prob) &= \min(1, \sum_{j=1}^n\expected{\Krec{i}_{r-\tollwork(\size(\prob))}(\sub_j(\prob))})
\end{align*}
For all $i$, $\Krec{i}_{r}(\prob) \leq \Krec{i+1}_{r}(\prob)$ and $\Krec{i}_{r}(\prob) \leq 1$, so $\sup_{i}\Krec{i}_{r}(\prob)$ exists. As before, if $\size(\prob) \leq \recterm$, then $K_{r}(\prob) \leq \Krec{i}_r(\prob)$. Here, if $\Krec{n}_{r}(\prob) < K_{r}(\prob)$ then
$\probgt{\size(\sub_{i_1}(\dots(\sub_{i_n}(\prob))))}{\recterm} > 0$ for some sequence $i_1,\dots,i_n$. The proof is by induction on $n$, where the base case is immediate.
For the inductive case, if
$\Krec{n+1}_{r}(\prob) < K_{r}(\prob)$
we must have that $\Krec{n+1}_{r}(\prob) < 1$, so
$\Krec{n+1}_{r}(\prob) = \sum_j \expected{\Krec{n}_{r - \tollwork(\size(\prob))}(\sub_j(\prob))}
   < 
  \sum_j \expected{K_{r-\tollwork(\size(\prob))}(\sub_j(\prob))}
$.
Hence for some $i$ and $\probB$ such that $\probeq{\sub_i(\prob)}{\probB} > 0$, we must have $\Krec{n}_{r - \tollwork(\size(\prob))}(\probB) < K_{r-\tollwork(\size(\prob))}(\probB)$. By the induction hypothesis, $\probgt{\size(\sub_{i_1}(\dots\sub_{i_n}(\probB)))}{\recterm} > 0$. Then the sequence $i_1, \dots, i_n, i$ suffices. It follows that $K_r(\prob) \leq \sup_i \Krec{i}_r(\prob)$, or else our assumption about the termination of the recurrence would be violated.

When $r < \detrec(\recterm)$ and $\size(\prob) \geq \recterm$, we have $\Krec{i}_r(\prob) \leq 1 \leq
\superadd(\size(\prob))\cdot\boundfun_r(\size(\prob))$. For the case when $r \geq \detrec(\recterm)$ we prove by induction on $i$ that $\Krec{i}_r(\prob) \leq
\superadd(\size(\prob))\cdot\boundfun_r(\size(\prob))$. The base case is similar to that of \thmref{thm:karp}. For the inductive step, the only non-trivial sub-case is when $\size(\prob) > \recterm$ and $r > \detrec(\size(\prob))$. Then we have:
\begin{align*}
\Krec{i+1}_r(\prob) 
 &\leq \sum_{j=1}^n \expected{\Krec{i}_{r-\tollwork(\size(\prob))}(\sub_j(\prob))} \\
 &\leq \sum_{j=1}^n \expected{\superadd(\size(\sub_j(\prob)))\cdot
                              \boundfun_{r-\tollwork(\size(\prob))}(\sub_j(\prob))} \\
 &\leq  \superadd(\size(\prob)) \cdot
        \expected{\max_{j=1}^n(\size(\sub_j(\prob)))}
        \frac{\boundfun_{r-\tollwork(\size(\prob))}(\min(\size(\prob), 
                                           \detrecinv(r-\tollwork(\size(\prob)))))}
             {\min(\size(\prob), \detrecinv(r-\tollwork(\size(\prob))))} \\
 & \text{(by \lemref{lem:ours} with $X = (\size(\sub_1(\prob)), \dots, \size(\sub_n(\prob)))$,
               $f = \boundfun_{r-\tollwork(\size(\prob))}$,} \\
 & \text{\qquad $g = \superadd$, and $c = \detrecinv(r-\tollwork(\size(\prob)))$)} \\
 & \leq \superadd(\size(\prob)) \cdot
       \meanbound(\size(\prob))
       \frac{\boundfun_{r-\tollwork(\size(\prob))}(\min(\size(\prob), 
                                     \detrecinv(r-\tollwork(\size(\prob)))))}
            {\min(\size(\prob), \detrecinv(r-\tollwork(\size(\prob))))} \\
 &= \superadd(\size(\prob))\cdot\boundfun_r(\size(\prob))
\end{align*}
In the second line, when we apply the induction hypothesis, we must check that $r - \tollwork(\size(\prob)) \geq \detrec(\recterm)$. This follows from the fact that $r > \detrec(\size(\prob)) \geq \tollwork(\size(\prob)) + \detrec(\meanbound(\size(\prob))) \geq \tollwork(\size(\prob)) + \detrec(\recterm)$.

\end{proof}
\section{Work Recurrences}
\label{sec:work}

Finally we show in this section that in some cases, work-recurrences can be transformed into analogous span-like recurrences. Our starting point is a recurrence:
\begin{equation}
\label{eqn:work-rec-general}
\work(\prob) \leq \tollwork(\size(\prob)) + \sum_{i=1}^n\work(\sub_i(\prob))
\end{equation}
which terminates when $\size(\prob) \leq \recterm$. Let $\tolldiv:
\mathbb{R} \to \mathbb{R}^+$ be a function such that for all
$(\probsizeB_1,\dots,\probsizeB_n)$ in the support of the distribution
$(\size(\sub_1(\prob)), \dots, \size(\sub_n(\prob)))$, $\sum
\tolldiv(\probsizeB_i) \leq \tolldiv(\probsize)$. Consider the random variable $\worktrans(\prob) = \work(\prob)/\tolldiv(\size(\prob))$. We have:

\begin{align*}
\worktrans(\prob) 
  &\leq \frac{\tollwork(\size(\prob))}{\tolldiv(\size(\prob))} 
        + \sum_{i=1}^n \left(\frac{1}{\tolldiv(\size(\prob))}\work(\sub_i(\prob))\right)
  \\
  &= \frac{\tollwork(\size(\prob))}{\tolldiv(\size(\prob))} 
        + \sum_{i=1}^n \left(\frac{\tolldiv(\size(\sub_i(\prob)))}{\tolldiv(\size(\prob))}
                      \worktrans(\sub_i(\prob))\right)
  \\
  &\leq \frac{\tollwork(\size(\prob))}{\tolldiv(\size(\prob))} 
        + \sum_{i=1}^n \left(\frac{\tolldiv(\size(\sub_i(\prob)))}{\tolldiv(\size(\prob))}
                      \max_{j=1}^n\worktrans(\sub_j(\prob))\right)
  \\
  &= \frac{\tollwork(\size(\prob))}{\tolldiv(\size(\prob))} 
        + \sum_{i=1}^n \left(\frac{\tolldiv(\size(\sub_i(\prob)))}{\tolldiv(\size(\prob))}\right)
                      \max_{j=1}^n\worktrans(\sub_j(\prob))
  \\
  &\leq \frac{\tollwork(\size(\prob))}{\tolldiv(\size(\prob))} 
        + \max_{j=1}^n\worktrans(\sub_j(\prob))
\end{align*}

That is, we see $\worktrans$ obeys a span-like recurrence with
toll-function $\tollwork/\tolldiv$. Since
$\probgt{\worktrans(\prob)}{r} =
\probgt{\work(\prob)}{r\tolldiv(\size(\prob))}$, we can apply the
theorem from the previous section to $\worktrans$ and thereby derive a
tail-bound for $\work$, giving us:

\begin{thm}
\label{thm:work}

  Let $\work$ satisfy the recurrence in
  \eqnref{eqn:work-rec-general} for $\prob$ such that $\size(\prob) > \recterm$.
  Assume for all $\prob$ there exists $k$ such that for all
  $i_1, \dots, i_k$ we have
  $\probgt{\size(h_{i_1}(\dots(h_{i_k}(\prob))))}{\recterm} = 0$.
  Suppose $\expected{\max_{i=1}^n \sub_i(\probsize)} \leq \meanbound(\probsize)$
  where $0 \leq \meanbound(\probsize) \leq \probsize$ and
  $\meanbound(\probsize)/\probsize$ is non-decreasing.
  Let $\detrec$ be a solution to the recurrence $\detrec(\probsize) \leq \tollwork(\probsize)/\tolldiv(\probsize) + \detrec(\meanbound(\probsize))$ under analogous assumptions as the previous theorems.
  Furthermore, let $\superadd, \tolldiv : \mathbb{R} \rightarrow \mathbb{R}^+$ be monotone
  functions where for all $x > \recterm$, $\superadd(x) \geq 1$. Assume
  for all $\prob$ such that $\size(\prob) > \recterm$, and all $(\probB_1, \dots, \probB_n)$
  in the joint distribution of $(\sub_1(\prob),\dots, \sub_n(\prob))$:
  \[ \sum_{i=1}^n \superadd(\size(\probB_i)) \leq \superadd(\size(\prob)) \qquad{\text{and}}\qquad
  \sum_{i=1}^n \tolldiv(\size(\probB_i)) \leq \tolldiv(\size(\prob)) \]
  Finally, let $a(x)/g(x) = 0$ for $x \in [0, \recterm]$ and assume $a(x)/g(x)$
  is monotone and continuous on $(\recterm, \infty)$.
  Then for all $\prob$ and $r$ such that $\superadd(\size(\prob)) \geq 1$, we have $\probgt{\work(\prob)}{r} \leq
  \superadd(\size(\prob)) \cdot \boundfun_{r/\tolldiv(\size(x))}(\size(\prob))$ where $\boundfun_r(\size(\prob))$ is as in \thmref{thm:karp}, using the toll function $\tollwork/\tolldiv$.
\end{thm}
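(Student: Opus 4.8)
The plan is to prove this by reducing to \thmref{thm:span} through the change of variable $\worktrans(\prob) = \work(\prob)/\tolldiv(\size(\prob))$, exactly as suggested by the derivation preceding the theorem statement. Since $\tolldiv$ maps into $\mathbb{R}^+$ and is therefore strictly positive, this rescaling is well-defined and satisfies $\probgt{\work(\prob)}{r} = \probgt{\worktrans(\prob)}{r/\tolldiv(\size(\prob))}$ for every threshold $r$. So it suffices to obtain a tail bound for $\worktrans$ of the form supplied by \thmref{thm:span} and then substitute $r \mapsto r/\tolldiv(\size(\prob))$ at the end.

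First I would establish that $\worktrans$ obeys the span-type recurrence \eqnref{eqn:span-rec-general-cdf} with toll function $\tollwork/\tolldiv$. The key computation is the chain of inequalities already displayed before the theorem: writing $\work(\sub_i(\prob)) = \tolldiv(\size(\sub_i(\prob)))\,\worktrans(\sub_i(\prob))$ and bounding each $\worktrans(\sub_i(\prob))$ by $\max_j \worktrans(\sub_j(\prob))$, the superadditivity hypothesis $\sum_i \tolldiv(\size(\sub_i(\prob))) \leq \tolldiv(\size(\prob))$ forces the coefficient $\sum_i \tolldiv(\size(\sub_i(\prob)))/\tolldiv(\size(\prob))$ to be at most $1$, collapsing the weighted sum into a single maximum. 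The step that requires care is that this argument must hold at the level of \emph{distributions}, not pathwise: conditioning on the realized value $(\sub_1(\prob),\dots,\sub_n(\prob)) = (\probB_1,\dots,\probB_n)$, the conditional law of $\work(\prob)$ is stochastically dominated by $\tollwork(\size(\prob)) + \sum_i \tolldiv(\size(\probB_i))\,\worktrans(\probB_i)$, and here I use that $\worktrans \geq 0$ (since $\work \geq 0$ and $\tolldiv > 0$) to replace each $\worktrans(\probB_i)$ by $\max_j \worktrans(\probB_j)$ without reversing the inequality. Dividing through by $\tolldiv(\size(\prob))$ and then taking the expectation over the subproblem sizes yields precisely \eqnref{eqn:span-rec-general-cdf} for $\worktrans$ with toll $\tollwork/\tolldiv$. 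I expect this distributional bookkeeping to be the main obstacle, since the surrounding derivation is written pathwise and it is the nonnegativity of $\worktrans$ that legitimizes the sum-to-max collapse.

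With the recurrence in hand, the remaining hypotheses of \thmref{thm:span} transfer directly from the assumptions of the present theorem: the termination condition, the expected-maximum bound $\expected{\max_i \sub_i(\probsize)} \leq \meanbound(\probsize)$ with $\meanbound(\probsize)/\probsize$ non-decreasing, the superadditivity of $\superadd$, and the fact that $\detrec$ solves $\detrec(\probsize) \leq \tollwork(\probsize)/\tolldiv(\probsize) + \detrec(\meanbound(\probsize))$ together with the stipulated continuity and monotonicity of the toll $\tollwork/\tolldiv$ on $(\recterm,\infty)$. Applying \thmref{thm:span} to $\worktrans$ then gives $\probgt{\worktrans(\prob)}{r} \leq \superadd(\size(\prob)) \cdot \boundfun_r(\size(\prob))$, where $\boundfun_r$ is built from the toll $\tollwork/\tolldiv$ as in \thmref{thm:karp}. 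Finally, substituting $r/\tolldiv(\size(\prob))$ for $r$ and using the identity $\probgt{\work(\prob)}{r} = \probgt{\worktrans(\prob)}{r/\tolldiv(\size(\prob))}$ delivers the claimed bound $\probgt{\work(\prob)}{r} \leq \superadd(\size(\prob)) \cdot \boundfun_{r/\tolldiv(\size(\prob))}(\size(\prob))$, completing the reduction.
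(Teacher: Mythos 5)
Your proposal is correct and is essentially the paper's own argument: the paper likewise defines $\worktrans(\prob) = \work(\prob)/\tolldiv(\size(\prob))$, uses the superadditivity of $\tolldiv$ to collapse the weighted sum over recursive calls into a single maximum, and then applies \thmref{thm:span} to $\worktrans$ with toll function $\tollwork/\tolldiv$, converting back via $\probgt{\worktrans(\prob)}{r} = \probgt{\work(\prob)}{r\tolldiv(\size(\prob))}$. Your added care in performing the sum-to-max collapse at the level of distributions, using the nonnegativity of $\worktrans$, only makes explicit what the paper's pathwise-written derivation leaves implicit.
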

\section{Examples}
\label{sec:applications}

We now apply the previous results to examples.

\paragraph{Quicksort} Let $l$ be a list and write $|l|$ for the length of the list. The work and span owing to the comparisons involved in parallel randomized quicksort satisfy the following recurrences:

\begin{align*}
 \work(l) &= (|l| - 1) + \work(\sub_1(l)) + \work(\sub_2(l)) \\
 \spn(l) &= \log |l| + \max(\spn(\sub_1(l)), \spn(\sub_2(l)))
\end{align*}
where $\expected{\max(|\sub_1(l)|, |\sub_2(l)|)} \leq \frac{7|l|}{8}$ when $|l| > 1$ and is $0$ otherwise; moreover, $|\sub_1(l)| + |\sub_2(l)| \leq |l| - 1$. 
Note that this is a case where the distributions of the sizes of the sublists $\sub_1(l)$ and $\sub_2(l)$ depend on more than just the size of $l$: in particular, if $l$ contains multiple versions of the same element, then the sublists will tend to be smaller than if all elements of $l$ are distinct. Therefore, the more general formulations of \thmref{thm:span} and \thmref{thm:work} are useful here.
For span, we apply \thmref{thm:span} with $\meanbound(x) = \frac{7x}{8}$, $\detrec(x) = (\log_{8/7}x + 1)^2$, and $\superadd(x) = x$, to get that for all positive integers $w$,
\[
\probgt{\spn(l)}{(\log_{8/7}|l|+1)^2 + w\log |l|} \leq |l| \left(\frac{7}{8}\right)^w
\]
when $|l| \geq 1$. Take $w$ to be $\lceil \frac{k}{(\log{8/7})}\log_{8/7}|l|\rceil$ for integer $k$ so that we have
\[
\probgt{\spn(l)}{(k+1)(\log_{8/7}|l|+1)^2} \leq \left(\frac{1}{|l|}\right)^{\frac{k}{\log{\frac{8}{7}}} - 1}
\]
For the work, we take $\tolldiv$ to be:
\[ 
  \tolldiv(x) =
    \begin{cases}
       \frac{1}{2} & \text{if $x \leq 1$} \\
       1           & \text{if $1 < x < 2$} \\
       x - 1       & \text{if $x \geq 2$}
    \end{cases}
\]
In this case, in order to solve the recurrence for $\detrec$, it is easier if we take $\meanbound$ to be
\[ 
  \meanbound(x) =
    \begin{cases}
       0           & \text{if $x < 8/7$} \\
       \frac{7x}{8}       & \text{if $x \geq 8/7$}
    \end{cases}
\]
Applying \thmref{thm:work} with $\detrec(x) = \log_{8/7}(x) + 1$ we obtain:
\[
\probgt{\work(l)}{(|l| - 1)(\log_{8/7}|l|+1) + w(|l| - 1)} \leq |l| \left(\frac{7}{8}\right)^w
\]
Here taking $w$ to be $\lceil k \log_{8/7}|l|\rceil$ for integer $k$ we have
\[
\probgt{\work(l)}{(k+1)\left(|l|\log_{8/7}|l|+|l|\right)} \leq  \left(\frac{1}{|l|}\right)^{k - 1}
\]

Karp shows in \citep{Karp94} that the quicksort work recurrence does satisfy the hypotheses of \thmref{thm:karp-work-simple}, which one can check knowing that $\expected{\work(|l|)} = 2((|l|+1)(H_{|l|} - 1) - (|l| - 1))$, where $H_n$ is the $n$th Harmonic number. However, it is much easier to check that $\tolldiv$ satisfies the analogous hypothesis of our result. Moreover, the resulting bound obtained by Karp's \thmref{thm:karp-work-simple},

\[ \probgt{\work(l)}{(a + 1)\expected{\work(l)}} \leq \frac{1}{e^a} \]
is weaker than the bound we obtain above.

\paragraph{Height of search trees}

Consider random binary search trees generated by successively inserting the elements of a permutation of $\{1, \dots, n\}$, with all permutations equally likely. The height $H(n)$ of such a tree obeys the following probabilistic recurrence:
\[
  H(n) = 1 + \max(H(\sub_1(n)), H(\sub_2(n)))
\]
where again $\expected{\max(\sub_1(n), \sub_2(n))} \leq \frac{7n}{8}$ and $\sub_1(n) + \sub_2(n) = n - 1$. Therefore we can apply \thmref{thm:span} with similar choice of $\meanbound$ as in the previous examples, and with $\detrec(n) = \log_{8/7}(n) + 1$ to get:
\[
\probgt{H(n)}{(\log_{8/7}n+1) + w} \leq n \left(\frac{7}{8}\right)^w
\]
Again, for appropriate choice of $w$ this yields:
\[
\probgt{H(n)}{(k+1)(\log_{8/7}n+1)} \leq \left(\frac{1}{n}\right)^{k - 1}
\]
A similar argument works for a large class of tree structures,
such as radix trees and quad trees, so long as we can bound the
expected value of the maximum of the sizes of the subtrees.

\section{Conclusion and Related Work}
\label{sec:conclusion}

We have presented a way to extend \thmref{thm:karp-simple} to
span and work recurrences, answering the open questions posed by Karp
in \cite{Karp94}. Our result for work recurrences is easier to apply
than \thmref{thm:karp-work-simple} and gives stronger bounds in cases
like the analysis of quicksort.

Of course, just as with the original theorems,
more precise bounds can be obtained by instead using more advanced techniques. The books by \citet{FlajoletSedgewick} and \citet{DrmotaTrees}
give an overview of techniques from analytic combinatorics and probability theory that can be used to 
analyze processes that give rise to these kinds of recurrences. 

Nevertheless, Karp has argued convincingly that it is still useful to
have easy to apply cook-book techniques for use in situations where
less precise results are often obtained using ad-hoc methods.  Cook-book
methods are common and well known for analyzing deterministic
divide-and-conquer recurrences (e.g. \cite{AkraB98, Roura01,
  DrmotaS13}).  However, for randomized divide-and-conquer, there are
far fewer results.  The method of \citet{Roura01} applies to the
recurrence for the expected number of comparisons in
quicksort. \citet{BazziM03} extend the Akra--Bazzi
method~\citep{AkraB98} to derive asymptotic bounds for expected work
recurrences.

\citet{ChaudhuriD97} study the unary probabilistic recurrences of
\thmref{thm:karp-simple} and give an alternative result that does not
require $\meanbound(x)/x$ to be non-decreasing (but in this case, the
bound is weaker than \thmref{thm:karp-simple}). They suggest that one advantage of their
proof is that it only uses common tools from probability theory like
Markov's inequality and Chernoff bounds. It would be
interesting to see if some of the ideas we used here
can also be used to extend their version to handle general span and work
recurrences.

In a technical report, \citet{KarpinskiZ91} also modify \thmref{thm:karp-simple}
to try to handle span and work recurrences. However, the bounds they
obtain appear to be weaker than ours. For instance, applied to the span
recurrence of quicksort, their result gives a bound of $\left(1 -
\frac{1}{n}\right)^w$ where our \thmref{thm:span} gives
$n(\frac{7}{8})^w$. That is, the bounds they give do not appear to be
sufficient to establish high probability bounds on run-time. Moreover,
their proof modifies the $D_r(x)$ function and applies
\lemref{lem:karp}, but they do not give a proof for why $D_r(x)/x$
remains non-decreasing, and Karp's strategy for showing that it is
non-decreasing does not appear to generalize to their version.

\paragraph{Acknowledgments} 
The author thanks Robert Harper, Jean-Baptiste Tristan,
Victor Luchangco, Guy Blelloch, and Guy L.~Steele Jr.~for helpful discussions.
This research was conducted with
U.S. Government support under and awarded by DoD, Air Force Office of
Scientific Research, National Defense Science and Engineering Graduate
(NDSEG) Fellowship, 32 CFR 168a.  Any opinions, findings and
conclusions or recommendations expressed in this material are those of
the author and do not necessarily reflect the views of this funding
agency.
 
\bibliographystyle{abbrvnat}
\bibliography{bib}
\end{document}